\begin{document}
\newcommand{\cip}{\perp\!\!\!\perp}
\newcommand{\nothere}[1]{}
\newcommand{\noi}{\noindent}
\newcommand{\mbf}[1]{\mbox{\boldmath $#1$}}
\newcommand{\cond}{\, |\,}
\newcommand{\hO}[2]{{\cal O}_{#1}^{#2}}
\newcommand{\hF}[2]{{\cal F}_{#1}^{#2}}
\newcommand{\tl}[1]{\tilde{\lambda}_{#1}^T}
\newcommand{\la}[2]{\lambda_{#1}^T(Z^{#2})}
\newcommand{\I}[1]{1_{(#1)}}
\newcommand{\cd}{\mbox{$\stackrel{\mbox{\tiny{\cal D}}}{\rightarrow}$}}
\newcommand{\cp}{\mbox{$\stackrel{\mbox{\tiny{p}}}{\rightarrow}$}}
\newcommand{\cas}{\mbox{$\stackrel{\mbox{\tiny{a.s.}}}{\rightarrow}$}}
\newcommand{\ld}{\mbox{$\; \stackrel{\mbox{\tiny{def}}}{=3D} \; $}}
\newcommand{\nk}{\mbox{$n \rightarrow \infty$}}
\newcommand{\con}{\mbox{$\rightarrow $}}
\newcommand{\dprime}{\mbox{$\prime \vspace{-1 mm} \prime$}}
\newcommand{\Borel}{\mbox{${\cal B}$}}
\newcommand{\bevis}{\mbox{$\underline{\em{Proof}}$}}
\newcommand{\Rd}[1]{\mbox{${\Re^{#1}}$}}
\newcommand{\il}[1]{{\int_{0}^{#1}}}
\newcommand{\pl}[1]{\mbox{\bf {\LARGE #1}}}
\newcommand{\expit}{\text{expit}}
\newcommand{\indep}{\rotatebox[origin=c]{90}{$\models$}}
\newcommand{\blind}{1}
\newcommand{\pr}{\text{pr}}
\newcommand{\var}{\text{var}}
\newcommand{\cov}{\text{cov}}
\newcommand{\Bin}{\text{Bin}}
\newcommand{\Exp}{\text{Exp}}
\newcommand{\unif}{\text{unif}}
\newcommand{\logit}{\text{logit}}
\newcommand{\sign}{\text{sign}}
\newcommand{\support}{\text{support}}
\newcommand\norm[1]{\left\lVert#1\right\rVert}

\newtheorem{theorem}{Theorem}
\newtheorem{lemma}{Lemma}
\newtheorem{prop}{Proposition}
\newtheorem{assumption}{Assumption}
\newtheorem{definition}{Definition}
\newtheorem*{remark}{Remark}
\newtheorem{corollary}{Corollary}
\newtheorem{example}{Example}

\parindent12pt

\begin{center}{\Large{Doubly robust tests of exposure effects\\ under high-dimensional confounding}}
 \end{center}
 
 { 
\begin{center}
Oliver Dukes,\\
\textit{Department of Applied Mathematics, Computer Science and Statistics}\\
\textit{Ghent University, Belgium}\\[2ex]
Vahe Avagyan\\
\textit{Mathematical and Statistical Methods Group}\\
\textit{Wageningen University and Research, The Netherlands}\\[2ex]
and Stijn Vansteelandt\\
\textit{Department of Applied Mathematics, Computer Science and Statistics}\\
\textit{Ghent University, Belgium}\\
\textit{and Department of Medical Statistics}\\
\textit{London School of Hygiene and Tropical Medicine, U.K.}\\[2ex]
email: oliver.dukes@ugent.be
\end{center}
}

\bigskip \setlength{\parindent}{0.3in} \setlength{\baselineskip}{24pt}

\begin{abstract}
After variable selection, standard inferential procedures for regression parameters may not be \textit{uniformly valid}; there is no finite-sample size at which a standard test is guaranteed to approximately attain its nominal size. This problem is exacerbated in high-dimensional settings, where variable selection becomes unavoidable. This has prompted a flurry of activity in developing uniformly valid hypothesis tests for a low-dimensional regression parameter (e.g. the causal effect of an exposure $A$ on an outcome $Y$) in high-dimensional models. So far there has been limited focus on model misspecification, although this is inevitable in high-dimensional settings. We propose tests of the null that are uniformly valid under sparsity conditions weaker than those typically invoked in the literature, assuming working models for the exposure and outcome are both correctly specified. When one of the models is misspecified, by amending the procedure for estimating the nuisance parameters, our tests continue to be valid; hence they are \textit{doubly robust}. Our proposals are straightforward to implement using existing software for penalized maximum likelihood estimation and do not require sample-splitting. We illustrate them in simulations and an analysis of data obtained from the Ghent University Intensive Care Unit.
\end{abstract}
\noi
{\it Key words}: Causal inference; doubly-robust estimation; high-dimensional inference; post-selection inference.

\section{Introduction}\label{introduction}

We will consider a study design which collects i.i.d. data on an outcome $Y$, an exposure of interest $A$ and a vector of covariates $L$, some of which may confound the relationship between $A$ and $Y$. A common means of assessing the effect of $A$ on $Y$ is to fit a regression model, adjusted for $A$ and the covariates; the estimate of the coefficient for $A$ is then used to obtain inference on the exposure effect. In practice, there is often little prior knowledge on which variables in a given data set are confounders, and furthermore how one should model the association between these confounders and outcome. Hence, data-adaptive procedures are typically employed in order to select the variables to adjust for and/or choose a model for their dependence on $Y$.  In particular, data-adaptive model selection becomes increasingly necessary when the dimension of $L$ is close to or greater than the number of observations. %Regularization techniques (such as the Lasso) are popular as they adapt the model to the data, whilst controlling for the risk of over-fitting.

However, obtaining hypothesis tests and confidence intervals that approximately enjoy their nominal size/coverage after model selection is challenging. The estimate of the effect of $A$ obtained directly via regularization techniques - e.g. using a penalized maximum likelihood estimator (PMLE) - will inherit a so-called regularization bias. Furthermore, the moderate-sample distribution of this estimator will typically be non-normal \citep{leeb_model_2005}. This is because convergence to the asymptotic normal distribution is not \textit{uniform} with respect to the parameters indexing the model for $Y$. Therefore, there exists no finite $n$ such that normal-based tests and intervals are guaranteed to perform well. This issue applies more generally to post-regularization estimators (where the model selected via regularization is refitted using the chosen covariates) and routinely-used stepwise variable selection strategies. Standard inferential procedures also ignore the additional uncertainty created during the model-selection process.

In the mathematical statistics literature, this has prompted the development of methods to obtain \textit{uniformly valid} inference for a low-dimensional regression parameter in a high-dimensional model. Initial focus was given to tests and confidence intervals for a coefficient in a regression model fit using the Lasso \citep{belloni_inference_2014,van_de_geer_asymptotically_2014,zhang_confidence_2014}; after which attention has turned to more general data-adaptive methods \citep{ning_general_2017,chernozhukov_double/debiased_2018}. The key insight has been that one should perform selection based on an additional working model for the association between $A$ and $L$ (in addition to $Y$ with $A$ and $L$). %Thus there is an important link here with the principle in the causal inference literature that one need adjust for variables that are common causes of the exposure and outcome in order to get unbiased estimates of an exposure effect. 
%As we will outline in Section \ref{review}, recent competing estimators of the parameter of interest possess essentially the same form, but differ with respect to the chosen (sufficiently fast converging) estimators of the nuisance parameters. 
The majority of the recent proposals rely on strong assumptions on sparsity e.g. the number of relevant covariates in $L$, which needs to be much smaller than the square root of the sample size $n$. In biostatistics, there were earlier developments in Targeted Maximum Likelihood Estimation (TMLE), where data-adaptive methods are incorporated into the estimation of causal effects \citep{van_der_laan_targeted_2011}. Much of the theory on TMLE is developed under Donsker conditions from the empirical process literature (but not all of it e.g. \citet{van_der_laan_cross-validated_2011}). These conditions are usually too restrictive in settings where the dimension of covariates is allowed to grow with sample size.

In this work, we describe how to obtain uniformly valid tests of the causal null hypothesis for a regression parameter in a high-dimensional Generalized Linear Model (GLM). Our tests require postulation of working models for the conditional mean of the exposure and the outcome given covariates. We will work under parametric models, as this is what is typically done in practice. First, we describe a procedure for estimating the nuisance parameters which yields a valid test so long as all working models are correct. However, given that regularization/model selection is required because we do not know the true models to start with, some degree of misspecification is likely. This is felt most acutely when the number of covariates in the data set is very large relative to the number of observations. We then show how to amend the earlier procedure for nuisance parameter estimation, so that the test statistic will converge uniformly to a limiting normal distribution if either working model is correct. Hence the test can be made \textit{uniformly doubly robust}. This is in contrast to several existing proposals, which give doubly robust estimators but not inference \citep{van_der_laan_targeted_2011,farrell_robust_2015,chernozhukov_double/debiased_2018,shah_hardness_2019}. Furthermore, we will show that when both working models are correct, then in certain cases the test will continue to attain its nominal size under sparsity conditions weaker than those invoked in the literature. Our test statistic is straightforward to construct, and all procedures for estimating the nuisance parameters can be performed using existing penalized regression software. Sample-splitting is not required, which makes our proposal much simpler to implement and less affected by regularization bias in moderate sample sizes.

The paper is organized as follows: in Section \ref{motivation}, we state the null hypothesis we are interested in testing and describe issues with obtaining valid inference in the high-dimensional setting. Section \ref{test} presents the score test statistic. In Sections \ref{nuis_cor} and \ref{nuis_mis}  we describe specific procedures for estimating the nuisance parameters, first when all working models are correct and then under misspecification. We also discuss the asymptotic properties of the various methods. %Section \ref{review} positions our work within the recent literature on high-dimensional inference. %, and in section \ref{estimation} we extend our methodology to estimation of the exposure effect parameter and the construction of confidence intervals.
We illustrate the methods via simulation studies in Section \ref{simulations} and an analysis of data from the Ghent University Intensive Care Unit in Section \ref{data_analysis}, where we consider the effect of a change in glycemia level on mortality in critically ill patients. 

%TO ADD: We give a theory of locally robust inference in high-dimensions, first under all working models being correct and then under misspecification. In doing so we unify many of the results in the recent literature on high-dimensions, arising in the causal inference economics computer science and mathematical statistics literatures.

\section{Motivation}\label{motivation}

We consider a test of the null hypothesis $H_0$ that $Y$ is independent of $A$ within strata defined by $L$, or
\begin{align}\label{cond_ind}
H_0: \quad Y\indep A|L.
\end{align}
We will let the exposure $A$ be binary e.g. it is coded as 1 if an individual undergoes a particular medical treatment and 0 otherwise; extensions to more general exposures will be discussed later in the paper. Then given the standard structural conditions in the causal inference literature, in particular that $L$ is sufficient to adjust for confounding, the null hypothesis also expresses the absence of a causal effect of $A$ on $Y$, conditional on $L$.%\begin{footnote}{The null hypothesis expressed here is stronger than is strictly necessary. In order to prove all of our Theorems, our proposed test statistic is UAN under the null hypothesis that both  $E(Y|A,L)=E(Y|L)$ and $E(A|Y,L)=E(A|L)$.}\end{footnote}.

%I think we need to point out that we are testing the null hypothesis of full independence, but that our test is only consistent in the direction of alternatives that obey some form of conditional mean dependence.
Simple methods for testing the null hypothesis of conditional independence are available via the regression framework. Standard score tests of $H_0$ require estimation of $E(Y|L)$, since under the null, $E(Y|A=a,L)=E(Y|L)$ (full conditional independence implies mean conditional independence). In realistic settings where $L$ has multiple continuous components, non-parametric estimators of this functional may perform poorly. A common strategy is to instead postulate a parametric regression model $\mathcal{B}$ for the mean of $Y$ conditional on the covariates:
\[E(Y|L)=m(L;\beta),\]
where $m(L;\beta)$ is a known function smooth in an unknown finite-dimensional parameter $\beta$.  %For example, if $Y$ is binary, one might postulate a logistic regression model such that $m(L;\beta)=\expit(\beta_0+\beta_1L_1+\beta_2L_2+\beta_3L_1L_2)$ with $\beta\equiv(\beta_0,\beta_1,\beta_2,\beta_3)'$. 
Then, via maximum likelihood estimation of GLMs, under the pre-specified model $\mathcal{B}$ one can obtain a consistent and uniformly asymptotically normal (UAN) score statistic for testing $H_0$ (where uniformity is with respect to $\beta$). This means that there exists a finite-sample size such that for any value of $\beta$ within the parameter space, the test statistic will be approximately normally distributed. We note that although this is a valid test of $H_0$, it is only consistent in the direction of alternatives that obey some form of mean conditional independence between $Y$ and $A$; the same holds for subsequent tests described in this paper.   %Furthermore, one can construct valid tests of $H_0$ using either a model-based or sandwich variance estimator, depending on knowledge of the distribution $f(Y|A,L)$.

Unfortunately, this standard methodology does not straightforwardly extend to high-dimensional settings. %Let $\eta$ denote all nuisance parameters required in constructing a test, such that here $\eta=\beta$. Also, 
%let $U(\beta)$ be an unbiased (unscaled) score test statistic. %, where $W=(Y,A,L^T)^T$.
In low-dimensional settings, one can perform a score test of the causal null $H_0$ based on the asymptotic distribution of an unbiased (unscaled) score test statistic $U(\beta)$; for likelihood estimation of canonical GLMs, $U(\beta)=A\{Y-m(L;\beta)\}$. Then let $\tilde{\beta}$ denote an estimate of $\beta$ obtained either directly via some regularization method or after model selection. Following a Taylor expansion,
\begin{align}\label{Mest_TE}
\frac{1}{\sqrt{n}}\sum^n_{i=1}U_i(\tilde{\beta})=&\frac{1}{\sqrt{n}}\sum^n_{i=1}U_i(\beta)+\frac{1}{n}\sum^n_{i=1}\frac{\partial U_i(\beta)}{\partial \beta}  \sqrt{n}(\tilde{\beta}-\beta)\nonumber\\&+O_{P}(\sqrt{n}||\tilde{\beta}-\beta||^2_2)
%\frac{1}{\sqrt{n}}\sum^n_{i=1}U_\psi(W_i;\tilde{\psi},\tilde{\eta})\\&=\frac{1}{\sqrt{n}}\sum^n_{i=1}U_\psi(W;0,\eta)+E\bigg\{\frac{\partial}{\partial \eta} U_\psi(W;0,\eta)\bigg\} \sqrt{n}(\tilde{\eta}-\eta)+\textrm{Remainder}
\end{align}
where $||.||_2$ denotes the Euclidean norm. For fixed $\beta$, by appealing to the oracle properties of $\tilde{\beta}$ it may be argued that the right hand side of (\ref{Mest_TE}) is asymptotically normal. Indeed,  assuming that $\tilde{\beta}$ converges sufficiently quickly, the remainder term $O_{P}(\sqrt{n}||\tilde{\beta}-\beta||^2_2)$ converges to zero.  But in general this does not prevent the existence of converging sequences $\beta_n$ for which $\sqrt{n}(\tilde{\beta}-\beta_n)$ and thus the test statistic has a complex, non-normal distribution. One root cause of this is the discrete nature of many data-adaptive methods e.g. stepwise selection; in some samples $\tilde{\beta}$ will be forced to zero whereas in others it will be allowed to take on its estimated value. This discrete behavior persists with increasing sample size under certain sequences $\beta_n$. The convergence of the resulting score test statistic to the limiting standard normal  is hence not uniform over the parameter space \citep{leeb_model_2005,dukes_how_2019}. This is troubling, as one wishes there to be a finite $n$ where the normal approximation is guaranteed to hold well, regardless of the (unknown) true values of the nuisance parameters, in order to guarantee that the procedure will work well in finite samples.% Hence one 

%At this stage of the paper, we prefer to avoid technicalities (which are left to the `Asymptotic Properties' sub-sections later on).  
%Alternatively, for $\epsilon>0$, we may wish to find a sample size $n^*$ such that for $n^*\geq n$, the probability that the absolute difference between $n^{-1/2}\sum^n_{i=1}U_i(\tilde{\beta})$ and $n^{-1/2}\sum^n_{i=1}U_i(\beta)$ is greater than zero is bounded by $\epsilon$. 

\section{A uniformly valid test of the causal null hypothesis}\label{test}

%\subsection{The test statistic}\label{test_stat}

We introduce in this section the statistic we will use for testing $H_0$ in a high-dimensional setting. 
Let us now formally define $\gamma$ to be the nuisance parameter indexing the parametric  model $\mathcal{A}$ for the conditional mean of $A$ given $L$:
\[E(A|L)=\pi(L;\gamma),\] where $\pi(L;\gamma)$ is a known function smooth in an unknown finite-dimensional parameter $\gamma$; the conditional mean $E(A|L)$ is known as the \textit{propensity score} for binary $A$. Our analysis is based on the score function 
\[U(\eta)\equiv\{A-\pi(L;\gamma)\}\{Y-m(L;\beta)\}.\]
where $\eta=(\gamma^T,\beta^T)^T$. This will require initial estimates of $\gamma$ and $\beta$ under working models $\mathcal{A}$ and $\mathcal{B}$ respectively. It is natural to model the dependence of $A$ on $L$ using a logistic regression e.g. $\pi(L;\gamma)=\expit(\gamma^TL)$; if $A$ were continuous, one might postulate a linear or log-linear model instead and the proposal can then be easily adapted. The form that model $\mathcal{B}$ takes will depend on the nature of the outcome. If $Y$ is continuous and unconstrained, one might postulate a linear model e.g. $m(L;\beta)=\beta^TL$. %Note that under the null, $E(Y|A=0,L)=E(Y|L)$ so $\beta$ can be estimated without conditioning on $A$.

One can then construct a test statistic
\[T_n=\frac{\frac{1}{\sqrt{n}}\sum^n_{i=1}U_i(\hat{\eta})}
{\sqrt{\frac{1}{n}\sum^n_{i=1}\big\{U_i(\hat{\eta})-\bar{U}(\hat{\eta})\big\}^2}}\]
that we will compare to the standard normal distribution. Here, $\bar{U}(\hat{\eta})=n^{-1}\sum^n_{i=1}U_i(\hat{\eta})$ and $\hat{\eta}$ is an estimate of $\eta$; in what follows, we will focus on regularized estimation of this parameter. Note that in evaluating the functions $\pi(L;\gamma)$ and $m(L;\beta)$ at their limiting values, it follows that the mean of $U(\eta)$ under the null is equal to 
\begin{align*}
E[\{E(A|L)-\pi(L;\gamma)\}\{E(Y|L)-m(L;\beta)\}]
\end{align*}
which equals zero if either model $\mathcal{A}$ or $\mathcal{B}$ is correct a.k.a under the union model  $\mathcal{A}\cup\mathcal{B}$. Hence we refer to the score $U(\eta)$ as doubly robust. %We note that the test statistic $T_n$ appears in a slightly different context in \cite{shah_hardness_2019}.

\section{Estimation of $\eta$ when all models are correct}\label{nuis_cor}

\subsection{Proposal}

We will first consider data generating processes where both models $\mathcal{A}$ or model $\mathcal{B}$ are correctly specified e.g. we will work under the intersection submodel $\mathcal{A}\cap\mathcal{B}$. In high-dimensional settings, under model $\mathcal{A}\cap\mathcal{B}$ one can estimate $\gamma$ and $\beta$ as $\hat{\gamma}$ and $\hat{\beta}$ respectively using any sufficiently fast-converging sparse estimator. % Specifically,  let $\hat{\gamma}$ and $\hat{\beta}$ be sparse estimators of $\gamma$ and $\beta$ that meet the rate conditions \ref{gamma_p_error} and \ref{beta_p_error} given in Appendix A. The aforementioned conditions are met 
For example, with binary $A$ and continuous $Y$ and using standard PMLE with a Lasso penalty, $\hat{\gamma}$ and $\hat{\beta}$ can be obtained as:
\begin{align*}
\hat{\gamma}&=\arg\min_{\gamma}\frac{1}{n}\sum^{n}_{i=1}[\log\{1+\exp(\gamma^TL_i)\}-A_i(\gamma^TL_i)]+\lambda_\gamma||\gamma||_1\\
\hat{\beta}&=\arg\min_{\beta}\frac{1}{2n}\sum^{n}_{i=1}(Y_i-\beta^TL_i)^2+\lambda_\beta||\beta||_1
\end{align*}
\citep{tibshirani_regression_1996}, where $\lambda_\gamma>0$ and $\lambda_\beta>0$  are penalty parameters. To keep the notation simple, we have omitted the dependence of $\hat{\gamma}$ and $\hat{\beta}$ on $\lambda_{\gamma}$ and $\lambda_{\beta}$ respectively (as well as on $n$). Plugging the resulting estimates into $T_n$ will yield a test statistic that under the null follows a standard normal distribution. We note that our Theorem \ref{theorem1_drt} is quite general and covers other penalized $m$-estimation methods (including variants on the Lasso).

To give some intuition about why one can plug  $\hat{\gamma}$ and $\hat{\beta}$ into $T_n$ and yet obtain a UAN test statistic, by repeating the expansion in (\ref{Mest_TE}) for the score $U(\hat{\eta})$, we observe the first-order term
\begin{align}\label{problem}
\frac{1}{n}\sum^n_{i=1}\frac{\partial U_i(\eta)}{\partial \eta}  \sqrt{n}(\hat{\eta}-\eta).
\end{align}
We can control this term under the model $\mathcal{A}\cap\mathcal{B}$, since $\partial U(\eta)/\partial \eta$ will then have expectation zero at the limiting value of $\eta$. For example, when $\pi(L;\gamma)=\expit(\gamma^TL)$ and $m(L;\beta)=\beta^TL$, then using the law of iterated expectation
\begin{align*}
&E\{\partial U(\eta)/\partial \beta\}=E[E\{A-\pi(L;\gamma)|L\}L]=0 \quad \textrm{and}\\
&E\{\partial U(\eta)/\partial \gamma\}=E\left[\pi(L;\gamma)\{1-\pi(L;\gamma)\}E\{Y-m(L;\beta)|L\}L\right]=0
\end{align*}
%Moreover, the variance of (\ref{problem}) is $o(1)$ assuming that the errors in the predictions $\pi(L;\hat{\gamma})$ and $m(L;\hat{\beta})$ converge in probability to zero. 
under the null. This property helps to ensure that term (\ref{problem}) is asymptotically negligible, regardless of the complex behavior of $\hat{\eta}$ . Such phenomena (in the context of doubly robust estimators) is well-understood when $L$ is low-dimensional \citep{vermeulen_bias-reduced_2015}. What is surprising is that it continues to hold in high-dimensional settings, even when non-regular estimators are used for $\eta$ \citep{belloni_post-selection_2016}. %The proposal here builds on the groundbreaking work of \citet{farrell_robust_2015} and \citet{chernozhukov_double/debiased_2018} in order to obtain tests for regression parameters in any GLM.  Unlike in the latter work, sample-splitting is not required which is a considerable practical advantage, although our analysis is currently restricted to sparse parametric working models.

We recommend selecting the penalty parameters in practice via cross validation, although there are limited theoretical results available on its validity in this context \citep{chetverikov_cross-validated_2016}, and our inferences assume that these parameters are fixed. The standard conditions are that $\lambda_\gamma=o(\sqrt{\log(p\lor n)/n})$ and $\lambda_\beta=o(\sqrt{\log(p\lor n)/n})$ (where $a\lor b$ denotes the maximum of $a$ and $b$), which are required for our theoretical results (see Appendix A). %Alternatively, one may use pre-specified penalty parameters as in \citet{belloni_post-selection_2016}. 
In practice, we also recommend refitting both working models; model refitting is typically done in the literature in order to improve finite-sample performance \citep{belloni_post-selection_2016,ning_general_2017}. Our theory can allow for this by appealing to results on Post-Lasso estimators \citep{belloni_inference_2014,belloni_post-selection_2016}. For any vector $a\in\mathbb{R}^{p}$, let us define its support as $\support(a)=\{j\in\{1,..,p\}:a_j\ne 0\}$; then we refit each model $\mathcal{A}$ and $\mathcal{B}$ using $\support(\gamma)\subseteq \support(\hat{\gamma})$ and $\support(\beta)\subseteq \support(\hat{\beta})$ respectively. %One can also refit the working models using the union of the selected covariates $ \support(\hat{\beta})\cup\support(\hat{\gamma})$, similar to what is done in the `post-double selection' procedure of \citet{belloni_inference_2014}.
%As we will discuss further in Section \ref{review}, however, refitting using the covariate union may hold implications in terms of the corresponding conditions on model sparsity.

%Write something here giving intuition.

\subsection{Asymptotic properties}\label{asymptotics1}

Let $\mathcal{P}$ be the class of laws that obey the intersection submodel $\mathcal{A}\cap\mathcal{B}$; then we are interested in convergence under a sequence of laws $P_n\in \mathcal{P}$. We will allow for $p$ to increase with $n$, and for the values of the parameters $\gamma$ and $\beta$ to depend on $n$, and hence also models $\mathcal{A}$ and $\mathcal{B}$ (although the notation with respect to the models will be suppressed). This is done in order to better gain insight into the finite-sample behavior of the test statistic when $L$ is high-dimensional. Let $\gamma_n$ and $\beta_n$ be the population values of the parameters indexing models $\mathcal{A}$ and $\mathcal{B}$ respectively. Finally, let $\mathbb{P}_{P_n}()$ denote a probability taken with respect to the local data generating process $P_n$.

\begin{theorem}\label{theorem1_drt}
Let us define the active set of variables as $S_\gamma=\support(\gamma_n)$ and $S_\beta=\support(\beta_n)$. Furthermore, let $s_\gamma$ denote the cardinality $|S_\gamma|$ and likewise $s_\beta=|S_\beta|$, and $p$ denote the length of the vectors $\gamma_n$ and $\beta_n$ . Suppose, in addition to Assumptions \ref{moment_drt} and \ref{p_error}  in Appendix A, the following sparsity conditions hold:
\begin{enumerate}[label=(\roman*)]
\item $(s_\gamma+s_\beta )\log(p\lor n)=o(n)$ \label{S_sum}
\item $s_\gamma s_\beta\log^2 (p\lor n)=o(n)$. \label{S_prod}
\end{enumerate}
%where $a \lor b$ denotes the maximum of $a$ and $b$. 
Then under $H_0$ and the intersection submodel, for all estimators satisfying Assumption \ref{p_error} in Appendix A, we have for any $t\in\mathbb{R}$:
\begin{align}\label{Main res1}
\lim_{n\to \infty}\sup_{P_n\in\mathcal{P}}|\mathbb{P}_{P_n}(T_n\leq t)-\Phi(t)|= 0
\end{align}
\end{theorem}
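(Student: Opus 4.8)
The plan is to show that the numerator of $T_n$ differs from $n^{-1/2}\sum_i U_i(\eta_n)$ by a term that is $o_{P_n}(1)$, that the denominator consistently estimates $\sigma_n^2=\var_{P_n}\{U(\eta_n)\}$, and that $n^{-1/2}\sum_i U_i(\eta_n)/\sigma_n$ obeys a central limit theorem; the conclusion then follows by Slutsky's lemma. To obtain the uniform statement \eqref{Main res1} it suffices to prove that along every sequence $P_n\in\mathcal P$, with associated $\eta_n=(\gamma_n^T,\beta_n^T)^T$, one has $\mathbb{P}_{P_n}(T_n\le t)\to\Phi(t)$ for each $t$: were the supremum in \eqref{Main res1} not to vanish, one could extract a sequence of laws violating this pointwise convergence. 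I therefore fix an arbitrary such sequence and suppress it in the notation. For the leading term I would invoke not an ordinary CLT but a Berry--Esseen bound, so that $\sup_t|\mathbb{P}_{P_n}(n^{-1/2}\sum_i U_i(\eta_n)/\sigma_n\le t)-\Phi(t)|\lesssim E_{P_n}|U(\eta_n)|^3/(\sigma_n^3\sqrt n)$, which vanishes under the moment and nondegeneracy requirements of Assumption \ref{moment_drt}.

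The heart of the argument is the expansion $n^{-1/2}\sum_i\{U_i(\hat\eta)-U_i(\eta_n)\}=\mathbb{G}_n\{U(\hat\eta)-U(\eta_n)\}+\sqrt n\,\{g(\hat\eta)-g(\eta_n)\}$, where $Z_i=(Y_i,A_i,L_i)$, $\mathbb{G}_n f=n^{-1/2}\sum_i\{f(Z_i)-E_{P_n}f\}$ and $g(\eta)=E_{P_n}\{U(\eta)\}$. Writing $a_0=A-\pi(L;\gamma_n)$, $y_0=Y-m(L;\beta_n)$, $d\pi=\pi(L;\hat\gamma)-\pi(L;\gamma_n)$ and $dm=m(L;\hat\beta)-m(L;\beta_n)$, a direct calculation gives the exact identity $U(\hat\eta)-U(\eta_n)=-a_0\,dm-d\pi\,y_0+d\pi\,dm$. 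Under the intersection model $\mathcal A\cap\mathcal B$ both $a_0$ and $y_0$ have mean zero given $L$, so each of the first two summands is centred; and, as noted after \eqref{problem}, the drift $g(\hat\eta)-g(\eta_n)=E_{P_n}(d\pi\,dm)$ collapses to the single bilinear cross term once the null $H_0$ is used to factorise the conditional expectation.

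I would then bound the three pieces separately. The drift is the double-robustness payoff: by the Cauchy--Schwarz inequality $\sqrt n\,|E_{P_n}(d\pi\,dm)|\le\sqrt n\,\norm{d\pi}_{L_2}\norm{dm}_{L_2}$, and inserting the rates of Assumption \ref{p_error}, $\norm{d\pi}_{L_2}=O_P(\sqrt{s_\gamma\log(p\lor n)/n})$ and $\norm{dm}_{L_2}=O_P(\sqrt{s_\beta\log(p\lor n)/n})$, this is $O_P(\sqrt{s_\gamma s_\beta}\,\log(p\lor n)/\sqrt n)=o_P(1)$ precisely under \ref{S_prod}. The two linear empirical-process terms are where the weaker sparsity is won, and here the null does the work: conditionally on $\{L_i\}_{i=1}^n$, the estimator $\hat\beta$ is a function of $\{(Y_i,L_i)\}$ while $\{a_{0,i}\}$ is a function of $\{(A_i,L_i)\}$, and since $Y\indep A\mid L$ these are conditionally independent, so $n^{-1/2}\sum_i a_{0,i}\,dm_i$ has conditional mean zero and conditional variance at most $c\,n^{-1}\sum_i dm_i^2=O_P(s_\beta\log(p\lor n)/n)$, whence this term is $O_P(\sqrt{s_\beta\log(p\lor n)/n})=o_P(1)$ under \ref{S_sum}; the symmetric argument, using $\hat\gamma\!\perp\!\{y_{0,i}\}$ given $\{L_i\}$, disposes of $n^{-1/2}\sum_i d\pi_i\,y_{0,i}$. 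The remaining centred quadratic term $\mathbb{G}_n(d\pi\,dm)$ is of strictly higher order and is negligible under \ref{S_sum}. Finally, $n^{-1}\sum_i\{U_i(\hat\eta)-\bar U(\hat\eta)\}^2\cp\sigma_n^2$ follows from the law of large numbers for $U(\eta_n)$ together with a Lipschitz/consistency argument controlling $n^{-1}\sum_i\{U_i(\hat\eta)^2-U_i(\eta_n)^2\}$, using $\bar U(\hat\eta)=O_P(n^{-1/2})$.

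I expect the main obstacle to be these two linear empirical-process terms in the absence of sample splitting. The naive bound $|n^{-1/2}\sum_i a_{0,i}\,dm_i|\le\norm{n^{-1/2}\sum_i a_{0,i}L_i}_\infty\,\norm{\hat\beta-\beta_n}_1$ yields only $O_P(s_\beta\log(p\lor n)/\sqrt n)$, which would force the stronger $s_\beta^2\log^2(p\lor n)=o(n)$ and destroy the advantage of \ref{S_prod}; it is the conditional-independence structure implied by $H_0$, rather than mere orthogonality of the score, that reduces these terms to the prediction-error rate and lets \ref{S_sum} suffice. Subsidiary care is then needed to pass between empirical and population $L_2$ norms and to bound the relevant quadratic forms using the sparse-eigenvalue/compatibility part of Assumption \ref{p_error}, to absorb the linearisation remainders arising from the nonlinearity of $\pi(L;\gamma)$ and of general $m(L;\beta)$, and to ensure that every $O_P$ and Berry--Esseen bound is governed only by the uniform constants in Assumptions \ref{moment_drt}--\ref{p_error}, so that the conclusion holds along the arbitrary sequence $P_n$ and hence uniformly over $\mathcal P$.
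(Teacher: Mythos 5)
Your proposal is correct and follows essentially the same route as the paper's proof: the same decomposition of $U(\hat\eta)-U(\eta_n)$ into two linear terms and a cross term, the same exploitation of the conditional independence $A\indep Y\mid L$ under the null (so that $\hat\beta$ is fixed given $(Y_i,L_i)_{i=1}^n$ and $\hat\gamma$ given $(A_i,L_i)_{i=1}^n$, reducing the linear terms to the prediction-error rate under condition (i)), Cauchy--Schwarz plus condition (ii) for the cross term, and a subsequence argument for uniformity. The only cosmetic differences are that the paper invokes the Lyapunov CLT for triangular arrays rather than Berry--Esseen, and bounds the cross term directly as an empirical sum via H\"older's inequality rather than splitting it into a population drift plus a centred part $\mathbb{G}_n(d\pi\,dm)$, which sidesteps the empirical-versus-population norm issue you flag.
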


\begin{remark}
\normalfont The key assumptions required for this result to hold are given in Appendix A. Therein, Assumption \ref{moment_drt} contains mild moment conditions, whereas Assumption \ref{p_error} requires sufficiently fast estimation of $\pi(L;\gamma)$ and $m(L;\beta)$ in the empirical $\ell_2$-norm. Condition \ref{S_sum} requires that both $s_\gamma<<n$ and $s_\beta<<n$; such conditions are quite standard in order to guarantee consistency of the sparse estimators. Condition \ref{S_prod} implies that we can allow for $s_\gamma$ to be large if $s_\beta$ is small, and vice versa. We view this as useful given that in many medical settings, doctors may rely on a limited number of factors when deciding on a patient's treatment. Hence it may even be plausible that the exposure model is `ultra-sparse' e.g. $s_\gamma<<\sqrt{n}$. In contrast, it appears less likely that a model for a clinical outcome (e.g. disease occurrence) can be well approximated by a small number of covariates. These conditions are essentially equivalent to those given in \citet{chernozhukov_double/debiased_2018} and in the Supplementary Appendix of \citet{belloni_inference_2014}, where sample-splitting/cross-fitting is used; indeed, we obtain slightly sharper results than \citet{belloni_inference_2014} (who require that $n^{2/r}s\log(p \lor n)=o(n)$ for some $r>4$ for uniformly valid inference) by focusing on binary exposures.  %Similarly, in genetic association studies, one may appeal to gene-environment independence \citep{chatterjee_exploiting_2005}, to justify that the model for a genetic exposure is sparse in the confounders. Our procedure can extend to non-binary exposures, although results may be less sharp in the general case. %but it is currently unclear whether uniformly valid inference for $\psi$ is more widely obtainable without imposing additional ultra-sparsity conditions or using sample-splitting.
\end{remark}

%\begin{remark}
%\normalfont \citet{chernozhukov_double/debiased_2018} arrive a similarly weak rates using a sample-splitting approach. This arises from the fact that the estimates of the functionals $E(A|L)$ and $E(Y|L)$ obtained from a training sample do not depend on the data for an individual $i$ in the auxiliary sample used to calculate the test statistic, allowing for simple bounding of first order terms. Restricting to sparse parametric models, we observe that sample-splitting is not required under the null of (\ref{cond_ind}) because the estimate of $\beta$ relies only on an individual's data on $Y$ and $L$ but not on $A$; likewise, $\gamma$ does not depend on $L$.

%This is because the estimates $\hat{\gamma}$ and $\hat{\beta}$ are estimated independently of the sample used to calculate the test statistic. However, for performing a score test under the null defined in (\ref{cond_ind}), the data on $Y$ is not used in estimating $\gamma$ (and likewise for $A$ and $\beta$). 
%\end{remark}

\begin{remark}\label{rem_farrell}
\normalfont 
In \citet{farrell_robust_2015}, uniformly valid inference for the marginal treatment effect is obtained under the stronger assumption that $s^2_\beta=o(n)$ (ignoring log factors). To obtain uniformly valid estimators and tests based on trading-off assumptions on $s_\gamma$ and $s_\beta$, it turns out to be crucial that first order terms like (\ref{problem}) have expectation zero, conditional only on $(L_i)^n_{i=1}$; in many estimation problems this is not possible, because fitting a model for $Y$ requires adjusting for/conditioning on the exposure, so the estimated coefficients depend on $(A_i)^n_{i=1}$. One way to get round this could be to use sample-splitting \citep{chernozhukov_double/debiased_2018}. However, under the null our score test statistic factorizes into a component involving $A$ and $L$, and a component involving $Y$ and $L$. That factorization is much like what one normally achieves via sample-splitting, and hence we can avoid doing this. Indeed, the test statistic also converges to the standard normal under the weaker null hypothesis that $E(Y|A,L)=E(Y|L)$, although this result requires stronger sparsity conditions (namely those of Theorem \ref{theorem2_drt}). Stronger conditions would also be required for the above procedure if the working models are refitted using the union of the selected covariates $ \support(\hat{\beta})\cup\support(\hat{\gamma})$ \citep{belloni_inference_2014}, since the variables selected in the model for $E(Y|L)$ depend additionally on the data $(A_i)^n_{i=1}$ \citep{farrell_robust_2015}.
\end{remark}

\begin{remark}
\normalfont
Although we have focused on sparse estimators, Theorem \ref{theorem1_drt} should hold for more general machine learning methods for estimating $\pi(L;\gamma)$ and $m(L;\beta)$. The key conditions required are that there is consistency in the prediction error, and shrinkage in the product of the $\ell_2$ norms of the errors as $o_p(n^{-1/2})$; see Appendix A and \cite{chernozhukov_double/debiased_2018} for further details.
%The key conditions required are that $n^{-1}\sum^n_{i=1}\{\pi(L_i;\hat{\gamma})-\pi(L_i;\gamma)\}^2=o_p(1)$, $n^{-1}\sum^n_{i=1}\{m(L_i;\hat{\beta})-m(L;\beta)\}^2=o_p(1)$, and 
%\begin{align*}
%&\left[n^{-1}\sum^n_{i=1}\{\pi(L_i;\hat{\gamma})-\pi(L_i;\gamma)\}^2\right]^{1/2}\left[n^{-1}\sum^n_{i=1}\{m(L_i;\hat{\beta})-m(L;\beta)\}^2\right]^{1/2}\\&=o_p(n^{-1/2})
%\end{align*}
%e.g. consistency in the prediction error, and shrinkage in the product of the $\ell_2$ norms of the errors as $o_p(n^{-1/2})$; see Appendix A for further details.
\end{remark}

%Intuition about why DR estimators have this nice property:
%Why we can weaken assumptions: use of the predictions...?
%Use of Post-Lasso vs. dobule selection.
%Discuss the rates briefly.
%Give main theorem.
%Relate to Farrell result
%Give intuition about special results for score tests; using data from two different studies.
% Conjecture about ML.
%Using holders inquality, can place conditions on the prediction error rather than 

\section{Estimation of $\eta$ under model misspecification}\label{nuis_mis}
\subsection{Proposal}

Under the union model $\mathcal{A}\cup\mathcal{B}$, plugging in an arbitrary high-quality sparse estimator $\hat{\eta}$ of $\eta$ into $T_n$ 
will not generally lead to the test statistic converging uniformly to the standard normal. Hence although the score is doubly robust, plugging in $\hat{\eta}$ will not yield \textit{uniformly valid, doubly robust inference}. This can be seen by replicating the Taylor expansion in (\ref{Mest_TE}) for the score $U(\hat{\eta})$; the gradient $\partial U(\eta)/\partial \eta$ is no longer guaranteed to be mean zero and one would generally need to approximate $\sqrt{n}(\hat{\eta}-\eta)$ to assess the variability in the score function under the union model. However, as previously discussed, approximating this term well is generally not possible in the high-dimensional setting. 

We will handle the problematic term (\ref{problem}) by using the gradient $\partial U(\eta)/\partial \eta$ in order to estimate $\eta$, so as to ensure that 
$n^{-1}\sum^n_{i=1}\partial U_i(\eta)/\partial \eta$ is approximately equal to zero at the estimator of the nuisance parameter. This leaves (aside from the remainder) only the score function $U(\eta)$, which we will show is UAN. Specifically, one can estimate $\eta$ by solving the following subgradient estimating equations:
\begin{align}
&0=\frac{1}{n}\sum^n_{i=1}\frac{\partial}{\partial \beta}U_i(\hat{\eta}_{BR})+\lambda_\gamma g (\hat{\gamma}_{BR})\label{br_gamma}\\
&0=\frac{1}{n}\sum^n_{i=1}\frac{\partial}{\partial \gamma}U_i(\hat{\eta}_{BR})+\lambda_\beta g(\hat{\beta}_{BR})\label{br_beta}
\end{align}
Here, $g(a)$ denotes a vector of elements $g(a_j)$, where $g(a_j)=\sign(a_j)$ for $j=1,...,p$ if $a_j\neq 0$ and $g(a_j)\in[-1,1]$ otherwise. The penalty term in (\ref{br_gamma}) corresponds to the subgradient of the $\ell_1$ norm $||\gamma||_1$ with respect to $\gamma$ and likewise for the penalty in (\ref{br_beta}); hence our procedure amounts to $\ell_1$-penalized $m$-estimation. Whilst our procedure requires that the initial working models $\mathcal{A}$ and $\mathcal{B}$ are of the same dimension, using a Lasso penalty will tend to return nuisance parameter estimates with different numbers of non-zero components. We again recommend refitting each working model using the covariates selected via penalization. 
%Here, $\delta\geq 1$ and $\circ$ is the Hadamard product operator. Also, for a vector $a\in\mathbb{R}^{p}$, $\sign(a)$ denotes a vector of elements $\sign(a_j)$, for $j=1,...,p$ and  $\delta|\gamma|^{\delta-1}\circ\sign(\gamma)$ and $\delta|\beta|^{\delta-1}\circ\sign(\beta)$ refer to the partial derivatives of $||\gamma||^\delta_\delta$ and $||\beta||^\delta_\delta$ with respect to $\gamma$ and $\beta$; we define the $\ell_\delta$ norm as $||a||_\delta\equiv\bigg(\sum^p_{i=1}|a_i|^\delta\bigg)^{1/\delta}$. Similar to \citet{avagyan_honest_2017}, the above procedure extends the bias-reduced doubly robust estimation methodology of \citet{vermeulen_bias-reduced_2015} to incorporate penalization; we thus use $\hat{\gamma}_{BR}$ and $\hat{\beta}_{BR}$ to refer to the resulting estimators of $\gamma$ and $\beta$ respectively. Within the TMLE framework, related ideas are developed in \citet{benkeser_doubly_2017} for targeting marginal treatment effects. Note that by letting $\delta\to 1+$, the penalty terms correspond to the sub-gradient of the $\ell_1$ or Lasso norm penalty $||\eta||_1$ with respect to $\eta$. Whilst our procedure requires that the initial working models $\mathcal{A}$ and $\mathcal{B}$ are of the same dimension, using a Lasso penalty will tend to return nuisance parameter estimates with different numbers of non-zero components. We again recommend refitting each working model using the covariates selected via penalization. 
The above procedure extends the bias-reduced doubly robust estimation methodology of \citet{vermeulen_bias-reduced_2015} to incorporate penalization; we thus use $\hat{\gamma}_{BR}$ and $\hat{\beta}_{BR}$ to refer to the resulting estimators of $\gamma$ and $\beta$ respectively.

\subsubsection{Example 1: continuous outcome}\label{cont1}

Returning to the example of Section \ref{test}, we might postulate a linear model for the outcome and a logistic model for the exposure. In this case, the solutions to the equations (\ref{br_gamma}) and (\ref{br_beta}) specify the optima of the following convex optimization problems:
\begin{align}
\label{gamma_est_drt}
\hat{\gamma}_{BR}&=\arg\min_{\gamma}\frac{1}{n}\sum^{n}_{i=1}\log\{1+\exp(\gamma^TL_i)\}-A_i(\gamma^TL_i)+\lambda_\gamma||\gamma||_1\\
\label{beta_est_drt}
\hat{\beta}_{BR}&=\arg\min_{\beta}\frac{1}{2n}\sum^{n}_{i=1}[\expit(\hat{\gamma}_{BR}^TL_i)\{1-\expit(\hat{\gamma}_{BR}^TL_i)\}(Y_i-\beta^TL_i)^2]+\lambda_\beta||\beta||_1.
\end{align}
Hence one can estimate $\gamma$ by fitting a logistic regression model with a Lasso penalty, and then estimate $\beta$ by fitting a linear regression model again with a Lasso penalty and weights constructed using the estimates $\hat{\gamma}_{BR}$. %Then the (unscaled) test statistic equals:
%\[\frac{1}{\sqrt{n}}\sum^n_{i=1}\{A_i-\expit(\hat{\gamma}_{BR}^TL_i)\}(Y_i-\hat{\beta}_{BR}^TL_i).\]

\subsubsection{Example 2: binary outcome}\label{bin1}

A more appropriate working model for the conditional mean of binary $Y$ might be $E(Y|A=0,L)=\expit(\beta^TL)$. Hence we now have two minimization problems 
\iffalse
\begin{align*}
0&=\frac{1}{n}\sum^n_{i=1}\expit(\hat{\beta}_{BR}^TL_i)\{1-\expit(\hat{\beta}_{BR}^TL_i)\}\{A_i-\expit(\hat{\gamma}_{BR}^TL_i)\}L_i\\&\quad+\lambda_\gamma\delta|\hat{\gamma}_{BR}|^{\delta-1}\circ\sign(\hat{\gamma}_{BR})\\
0&=\frac{1}{n}\sum^n_{i=1}\expit(\hat{\gamma}_{BR}^TL_i)\{1-\expit(\hat{\gamma}_{BR}^TL_i)\}\{Y_i-\expit(\hat{\beta}_{BR}^TL_i)\}L_i\\&\quad+\lambda_\beta\delta|\hat{\beta}_{BR}|^{\delta-1}\circ\sign(\hat{\beta}_{BR})
\end{align*}
\fi
\begin{align*}
\hat{\gamma}_{BR}&=\arg\min_{\gamma}\frac{1}{n}\sum^{n}_{i=1}[\expit(\hat{\beta}_{BR}^TL_i)\{1-\expit(\hat{\beta}_{BR}^TL_i)\}][\log\{1+\exp(\gamma^TL_i)\}-A_i(\gamma^TL_i)]\\&\quad+\lambda_\gamma||\gamma||_1\\
\hat{\beta}_{BR}&=\arg\min_{\beta}\frac{1}{n}\sum^{n}_{i=1}[\expit(\hat{\gamma}_{BR}^TL_i)\{1-\expit(\hat{\gamma}_{BR}^TL_i)\}][\log\{1+\exp(\beta^TL_i)\}-A_i(\beta^TL_i)]\\&\quad+\lambda_\beta||\beta||_1.
\end{align*}
An additional complication is then that solving each set of equations requires initial estimates of the other nuisance parameter. There are then two possible approaches one might take; one is  to estimate $\gamma$ and $\beta$ together by maximizing a joint penalized likelihood. %e.g. letting $\delta\to 1+$, we consider the problem:
%\begin{align*}
%\min_{\gamma,\beta}\mathcal{F}(\gamma,\beta)=&\frac{1}{n}\sum^{n}_{i=1}\bigg[\log\{1+\exp(\gamma^TL_i)\}-A_i(\gamma^TL_i)\bigg]\expit(\beta^TL_i)\{1-\expit(\beta^TL_i)\}+\lambda_\gamma||\gamma||_1\\
%&\quad\quad+\bigg[\log\{1+\exp(\beta^TL_i)\}-Y_i(\beta^TL_i)\bigg]\expit(\gamma^TL_i)\{1-\expit(\gamma^TL_i)\}+\lambda_\beta||\beta||_1
%\end{align*}
%This could be done by arranging the data such that there are two rows per subject, with the `outcome' vector taking first the value $A$ and then $Y$. One would have to allow for separate penalties for the first and second rows for each subject. 
Alternatively, one could use the iterative procedure described in Algorithm 1 in Appendix B, which could be easily adapted for other types of outcome. \\

\citet{avagyan_honest_2017} and \citet{tan_model-assisted_2019} make a closely related proposal in the context of estimation the average treatment effect; however, focusing on hypothesis testing and conditional treatment effects enables some simplifications. Our approach for nuisance parameter estimation is based on weighted $\ell_1$-penalized maximum likelihood estimation, so is both easier to implement using existing software and likely to be less computationally demanding in high-dimensional settings. We will also be able to obtain sharper results  (in terms of conditions on sparsity) on the theoretical properties of the test statistic; see Section \ref{asymptotics2}. When the exposure model is linear, our estimator of the unscaled test statistic $n^{-1/2}\sum^n_{i=1}U_i(\hat{\eta}_{BR})$ reduces to the `decorrelated score' approach of \citet{ning_general_2017}. This work thus extends the robustness of their score function to the construction of a test that is UAN under the model $\mathcal{A}\cup\mathcal{B}$. Specifically, by allowing for arbitrary exposure and outcome models and scaling the statistic by a `sandwich estimator' of the variance of $U_i(\hat{\eta}_{BR})$, our test has greater robustness to misspecification than the proposals of \citet{ning_general_2017}, under equivalent assumptions on $s_\gamma$ and $s_\beta$ (see Section \ref{asymptotics2}).

\subsection{Asymptotic properties}\label{asymptotics2}

%We will consider data generating processes where either model $\mathcal{A}$ or model $\mathcal{B}$ is correctly specified, but not necessarily both e.g. we will work under the union model $\mathcal{A}\cup\mathcal{B}$. Let $\mathcal{P}$ be the class of laws that obey the model $\mathcal{A}\cup\mathcal{B}$; then we are interested in convergence under a sequence of laws $P_n\in \mathcal{P}$. We will allow for $p$ to increase with $n$, and for the values of the parameters $\psi$, $\gamma$ and $\beta$ to depend on $n$, and hence also models $\mathcal{A}$ and $\mathcal{B}$ (although the notation with respect to the models will be suppressed). This is done in order to better gain insight into the finite-sample behavior of the test statistic when $L$ is high-dimensional. Let $\gamma_n$ and $\beta_n$ be the population solutions to the estimating equations defined in (\ref{br_gamma}) and (\ref{br_beta}) respectively, letting $\delta\to1+$. Finally, let $\mathbb{P}_{P_n}[]$ denote a probability taken with respect to the local data generating process $P_n$.

%\subsection{Main results}
 
We will now study convergence of $T_n$ under a sequence of laws $P_n\in\mathcal{P}^*$, where $\mathcal{P}^*$ represents a class of laws that obey the union model $\mathcal{A}\cup\mathcal{B}$; hence this class is much larger than that considered in Section \ref{asymptotics1}.

\begin{theorem}\label{theorem2_drt}
Suppose, in addition to Assumptions \ref{moment_drt}, \ref{l_error_w} and \ref{p_error_w} in Appendix A, the following sparsity condition holds:
\begin{enumerate}[label=(\roman*)]\addtocounter{enumi}{2}
\item $(s^2_\gamma+s^2_\beta)\log^2 (p\lor n)=o(n)$. \label{US_sum}
\end{enumerate}
Then under $H_0$ and the union model $\mathcal{A}\cup\mathcal{B}$, using estimators $\hat{\gamma}_{BR}$ and $\hat{\beta}_{BR}$, we have for any $t\in\mathbb{R}$,
\begin{align}\label{Main res2}
\lim_{n\to \infty}\sup_{P_n\in\mathcal{P^*}}|\mathbb{P}_{P_n}(T_n\leq t)-\Phi(t)|= 0.
\end{align}
\end{theorem}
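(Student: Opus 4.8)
The plan is to show that the bias-reduced estimating equations (\ref{br_gamma})--(\ref{br_beta}) force the first-order term (\ref{problem}) to be negligible \emph{at the sample level}, so that the numerator of $T_n$ behaves like a centred sum of i.i.d.\ terms, and then to invoke a uniform (triangular-array) central limit theorem. Throughout I write $\eta^*=(\gamma^{*T},\beta^{*T})^T$ for the probability limit of $\hat{\eta}_{BR}$, defined through the population analogue of (\ref{br_gamma})--(\ref{br_beta}) (Assumption \ref{l_error_w}); under the union model at least one of $\gamma^*,\beta^*$ equals the corresponding true parameter, and the double-robustness property established in Section \ref{test} then gives $E_{P_n}\{U_i(\eta^*)\}=0$ under $H_0$. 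The crucial structural feature I would exploit is that the score factorizes, $U(\eta)=g(\gamma)h(\beta)$ with $g(\gamma)=A-\pi(L;\gamma)$ depending on $\gamma$ only and $h(\beta)=Y-m(L;\beta)$ on $\beta$ only, so that the numerator difference admits the exact decomposition
\[
\frac{1}{\sqrt{n}}\sum_{i=1}^n\{U_i(\hat{\eta}_{BR})-U_i(\eta^*)\}=A_n+B_n+C_n,
\]
where $A_n=\tfrac{1}{\sqrt{n}}\sum_i\{g_i(\hat{\gamma}_{BR})-g_i(\gamma^*)\}h_i(\beta^*)$, $B_n=\tfrac{1}{\sqrt{n}}\sum_i g_i(\gamma^*)\{h_i(\hat{\beta}_{BR})-h_i(\beta^*)\}$, and $C_n=\tfrac{1}{\sqrt{n}}\sum_i\{g_i(\hat{\gamma}_{BR})-g_i(\gamma^*)\}\{h_i(\hat{\beta}_{BR})-h_i(\beta^*)\}$.

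The product term $C_n$ is the genuine second-order remainder: by Cauchy--Schwarz it is bounded by $\sqrt{n}$ times the product of the two mean-square prediction errors, which under Assumption \ref{p_error_w} and the sparsity condition \ref{US_sum} is of order $\sqrt{n}\,s_\gamma^{1/2}s_\beta^{1/2}\log(p\lor n)/n=o_P(1)$. The first-order terms $A_n$ and $B_n$ are where the bias reduction does its work. Linearizing $g$ in $A_n$ reduces it to $-\{n^{-1}\sum_i \pi'(L_i;\gamma^*)L_i^T h_i(\beta^*)\}\sqrt{n}(\hat{\gamma}_{BR}-\gamma^*)$ plus a quadratic-in-$(\hat{\gamma}_{BR}-\gamma^*)$ piece; the bracket is precisely the empirical $\gamma$-gradient $n^{-1}\sum_i \partial U_i(\eta^*)/\partial\gamma$. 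Here the essential difference from Theorem \ref{theorem1_drt} arises: under misspecification its population expectation need not vanish, so I cannot appeal to orthogonality and must instead use the estimating equation (\ref{br_beta}), which sets the empirical gradient at $\hat{\eta}_{BR}$ equal to the penalty subgradient, of sup-norm $O(\lambda_\beta)$. Writing $G_\gamma(\eta^*)=G_\gamma(\hat{\eta}_{BR})+\{G_\gamma(\eta^*)-G_\gamma(\hat{\eta}_{BR})\}$, the first piece contributes $O(\sqrt{n}\,\lambda_\beta\|\hat{\gamma}_{BR}-\gamma^*\|_1)=O(s_\gamma\log(p\lor n)/\sqrt{n})$, while the second (a Hessian-weighted bilinear form in the estimation errors) is again a product-of-rates term; both are $o_P(1)$ under \ref{US_sum}. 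An identical argument using (\ref{br_gamma}) handles $B_n$, and the quadratic pieces from linearizing $g$ and $h$ are absorbed into the same product/quadratic-rate bounds.

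It then remains to treat the leading sum and the self-normalization. For the numerator I would apply a Lyapunov/Lindeberg central limit theorem to the triangular array $\{U_i(\eta^*)\}$, using the moment bounds of Assumption \ref{moment_drt} both to verify the Lindeberg condition and to keep $\sigma_n^2=\var_{P_n}\{U(\eta^*)\}$ bounded away from $0$ and $\infty$; a Berry--Esseen bound for the array delivers the required \emph{uniform} convergence of $(n\sigma_n^2)^{-1/2}\sum_i U_i(\eta^*)$ over $P_n\in\mathcal{P}^*$. For the denominator I would show that $n^{-1}\sum_i\{U_i(\hat{\eta}_{BR})-\bar{U}(\hat{\eta}_{BR})\}^2=\sigma_n^2+o_P(1)$, which needs only the individual consistency statements $n^{-1}\sum_i\{\pi(L_i;\hat{\gamma}_{BR})-\pi(L_i;\gamma^*)\}^2=o_P(1)$ and the analogue for $m$ (again via Cauchy--Schwarz on $U(\hat{\eta}_{BR})^2-U(\eta^*)^2$), together with a law of large numbers for $U_i(\eta^*)^2$. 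Combining these by Slutsky's theorem and the uniform Berry--Esseen estimate yields (\ref{Main res2}).

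The main obstacle, and the point deserving the most care, is the control of $A_n$ and $B_n$ under misspecification: because population orthogonality is unavailable for the incorrectly specified working model, the argument must route entirely through the sample-level near-stationarity guaranteed by (\ref{br_gamma})--(\ref{br_beta}), and the resulting penalty contribution $\sqrt{n}\lambda\|\hat{\eta}_{BR}-\eta^*\|_1$ is exactly what forces the stronger sum-of-squares sparsity requirement \ref{US_sum} in place of the product-type trade-off of Theorem \ref{theorem1_drt}. A secondary difficulty is ensuring that every $o_P(1)$ bound is uniform over the much larger class $\mathcal{P}^*$, which is handled by tracking the moment and restricted-eigenvalue constants supplied by Assumptions \ref{moment_drt}, \ref{l_error_w} and \ref{p_error_w} rather than by any single fixed law.
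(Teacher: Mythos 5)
Your proposal is correct and follows essentially the same route as the paper: the same three-term decomposition (your $A_n,B_n,C_n$ are the paper's $R_2,R_1,R_3$), the same key device of bounding the empirical gradients via the penalty subgradients of (\ref{br_gamma})--(\ref{br_beta}) and then applying H\"older with the $\ell_1$ rates of Assumption \ref{l_error_w}, Cauchy--Schwarz for the product term, and a triangular-array CLT plus variance consistency for the final step. The only (cosmetic) difference is that you linearize at $\eta^*$ and then shift the gradient to $\hat{\eta}_{BR}$, whereas the paper first swaps the nuisance evaluation point (producing its term (\ref{R12})) and Taylor-expands directly at $\hat{\eta}_{BR}$; the resulting bounds are identical.
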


\begin{remark}
\normalfont This theorem states that under the key ultra-sparsity condition \ref{US_sum}, our proposed test is uniformly doubly robust over the parameter space. This condition entails that the number of non-zero coefficients in models $\mathcal{A}$ and $\mathcal{B}$ are small relative to the square root of the overall sample size; this is much stronger than conditions \ref{S_sum} and \ref{S_prod}. Such an assumption is common however in the growing literature on high-dimensional inference \citep{belloni_post-selection_2016,ning_general_2017}, where model misspecification is not generally permitted. Assumptions \ref{l_error_w} and \ref{p_error_w} require sufficiently rapid estimation of the coefficients $\gamma$ and $\beta$ (notably in $\ell_1$-norm) as well as the functions $\pi(L;\gamma)$ and $m(L;\beta)$.
\end{remark}
An advantage of our proposal is that any \textit{a priori} knowledge on the distribution of the exposure can be easily incorporated into the test statistic. Indeed, note that when $E(A|L)$ is known, then following our proposal, one can estimate $\beta$ using standard $\ell_1$-penalized regression without weights; hence the proposal reduces to the one described in Section \ref{nuis_cor}. This is because there is no gradient with respect to the parameters in model $\mathcal{A}$ and the exposure model is guaranteed to be correct. 

\begin{corollary}\label{theorem3_drt}
Suppose that $A$ is randomized, with randomization probability $E(A|L)=\pi(L;\gamma^*)$ for known $\gamma^*$. If $\gamma^*$ is plugged into $T_n$, one can obtain a uniformly valid test using the weaker sparsity condition $s_\beta \log(p\lor n)=o(n)$
%\item $s_\gamma \log(p\lor n)=o(n)$
%\item $s_\beta \log(p\lor n)=o(n)$
regardless of whether model $\mathcal{B}$ is correctly specified.
\end{corollary}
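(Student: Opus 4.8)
The plan is to exploit the fact that, with $\gamma^*$ known, the numerator of $T_n$ reduces to $n^{-1/2}\sum_{i=1}^n U_i(\gamma^*,\hat\beta)$ with $U_i(\gamma^*,\hat\beta)=\{A_i-\pi(L_i;\gamma^*)\}\{Y_i-m(L_i;\hat\beta)\}$, and that $\hat\beta$ is obtained by fitting model $\mathcal{B}$ to $(Y_i,L_i)_{i=1}^n$ alone. Hence $\hat\beta$, and therefore every weight $\{Y_i-m(L_i;\hat\beta)\}$, is measurable with respect to $\mathcal{G}=\sigma\{(Y_i,L_i)_{i=1}^n\}$. First I would condition on $\mathcal{G}$: under randomisation and i.i.d. sampling the $A_i$ are conditionally independent across $i$ with $E\{A_i\mid\mathcal{G}\}=E\{A_i\mid L_i\}=\pi(L_i;\gamma^*)$, so conditionally on $\mathcal{G}$ the summands of the numerator are independent and have mean \emph{exactly} zero. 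This is precisely the factorisation alluded to in Remark~\ref{rem_farrell}, and it lets me bypass the Taylor expansion (\ref{Mest_TE}) altogether: there is no gradient with respect to $\gamma$ to control, and no first-order remainder of the form (\ref{problem}) to bound against the estimation error in $\hat\beta$.

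The argument then reduces to a conditional central limit theorem for a self-normalised sum. Because the score is already centred conditionally on $\mathcal{G}$, I would apply a Lindeberg--Feller CLT given $\mathcal{G}$, with conditional variance
\[V_n=\frac{1}{n}\sum_{i=1}^n\pi(L_i;\gamma^*)\{1-\pi(L_i;\gamma^*)\}\{Y_i-m(L_i;\hat\beta)\}^2.\]
Writing $V_n^{*}$ for the analogous quantity with $\hat\beta$ replaced by its (possibly pseudo-true) limit $\beta_n$, a Cauchy--Schwarz bound on $V_n-V_n^{*}$ shows the two agree up to $o_P(1)$ provided $n^{-1}\sum_{i=1}^n\{m(L_i;\hat\beta)-m(L_i;\beta_n)\}^2\cp 0$. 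This prediction-consistency statement is exactly the standard Lasso/post-Lasso rate $O_P(s_\beta\log(p\lor n)/n)$, which is $o_P(1)$ precisely under the stated condition $s_\beta\log(p\lor n)=o(n)$; this is the only place the sparsity assumption enters, and it is weaker than conditions \ref{S_sum}--\ref{S_prod} because the cross term in $\hat\gamma$ never appears. The conditional Lindeberg condition then follows from $|A_i-\pi(L_i;\gamma^*)|\le 1$ together with the moment bounds of Assumption~\ref{moment_drt}, which also keep $V_n$ bounded away from zero and infinity so that the self-normalised ratio has a nondegenerate limit.

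Next I would show the denominator of $T_n$ is consistent for $V_n$: since $E\{U_i(\gamma^*,\hat\beta)^2\mid\mathcal{G}\}=\pi(L_i;\gamma^*)\{1-\pi(L_i;\gamma^*)\}\{Y_i-m(L_i;\hat\beta)\}^2$, a conditional law of large numbers gives $n^{-1}\sum_i U_i(\hat\eta)^2-V_n\cp 0$, while $\bar U(\hat\eta)\cp 0$; Slutsky's theorem then yields $T_n\cd N(0,1)$. To upgrade this to the uniform statement I would run the entire argument along an arbitrary sequence $P_n\in\mathcal{P}^{*}$: since Assumption~\ref{moment_drt} and the sparsity condition hold uniformly over the class, the pointwise convergence $\mathbb{P}_{P_n}(T_n\le t)\to\Phi(t)$ holds for every such sequence, which is equivalent to $\sup_{P_n\in\mathcal{P}^{*}}|\mathbb{P}_{P_n}(T_n\le t)-\Phi(t)|\to 0$.

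The main obstacle is ensuring that the conditional CLT and the prediction-consistency bound hold uniformly when $\hat\beta$ is a non-regular estimator and model $\mathcal{B}$ is misspecified, so that $\beta_n$ is only a pseudo-true parameter. The delicate points are verifying that the Lasso prediction rate continues to hold at this pseudo-true limit and that the conditional variance $V_n$ stays bounded away from zero along the least-favourable sequences; both are controlled by the moment and identifiability content of Assumption~\ref{moment_drt} rather than by any strengthening of the sparsity requirement, which is what makes the weaker condition $s_\beta\log(p\lor n)=o(n)$ sufficient here.
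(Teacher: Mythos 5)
Your proposal is correct and rests on the same key insight as the paper's own proof---namely that once $\gamma^*$ is known, $\hat\beta$ is measurable with respect to $(Y_i,L_i)_{i=1}^n$ while $A_i-\pi(L_i;\gamma^*)$ has \emph{exact} conditional mean zero given that information, so the cross term in $\hat\gamma$ and the associated ultra-sparsity condition never arise---but you execute it by a genuinely different route. The paper keeps the architecture of Theorem \ref{theorem1_drt}: it splits the numerator into the oracle term $n^{-1/2}\sum_i U_i(\gamma^*,\beta_n)$ plus the remainder $R_1^*=\sqrt{n}\,\mathbb{E}_n[\{A_i-\pi(L_i;\gamma^*)\}\{m(L_i;\beta_n)-m(L_i;\hat\beta)\}]$, kills $R_1^*$ by noting its conditional mean given $(Y_i,L_i)_{i=1}^n$ is zero and its conditional second moment is bounded by $C\,\mathbb{E}_n[\{m(L_i;\beta_n)-m(L_i;\hat\beta)\}^2]=o_{P_n}(1)$ under $s_\beta\log(p\lor n)=o(n)$ (then Chebyshev), and recycles Steps 2--4 of the earlier proof verbatim (unconditional Lyapunov CLT for the oracle term, variance consistency, subsequence argument for uniformity). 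You instead dispense with the oracle/remainder split and apply a Lindeberg--Feller CLT conditionally on $\mathcal{G}$ to the whole sum evaluated at $\hat\beta$. Both are valid; the paper's version is more economical because the CLT and variance steps are inherited unchanged, whereas yours makes the ``sample-splitting for free'' structure of Remark \ref{rem_farrell} most transparent and correctly localizes the single use of the sparsity condition (controlling $V_n-V_n^*$ and keeping $V_n$ bounded away from zero via Assumption \ref{moment_drt}). The one step you should make explicit is the de-conditioning: you need to pass from $\mathbb{P}_{P_n}(T_n\le t\mid\mathcal{G})\to\Phi(t)$ in probability to the unconditional statement (dominated convergence applied to the bounded conditional probabilities); this is routine but is precisely the step the paper's decomposition sidesteps.
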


\begin{remark}
\normalfont This corollary of Theorems \ref{theorem1_drt} and \ref{theorem2_drt} states that when we know the randomization probabilities, one can rely merely on this weaker sparsity condition in order to get a valid test, even if model $\mathcal{B}$ is misspecified. %Our asymptotic results here complement sharper finite-sample results in \citet{wager_high-dimensional_2016}, by being applicable to non-linear models without the need for sample-splitting.
\end{remark}

%\begin{remark}
%\normalfont 
%When we know the randomization probability, one can rely merely on this weaker sparsity condition in order to get a valid test. Our asymptotic results here complement sharper finite-sample results in \citet{wager_high-dimensional_2016}, by being applicable to non-linear models without the need for sample-splitting.
%\end{remark}

When both working models are correctly specified and the outcome regression model is linear, then we have some additional robustness to violations of sparsity, as the following theorem illustrates:

\begin{theorem}\label{theorem4_drt}
When $m(L;\beta)$ is linear in $\beta$, and we restrict ourselves to the class of laws in $\mathcal{P}^*$ that obey the intersection submodel $\mathcal{A}\cap\mathcal{B}$, then (using estimators $\hat{\gamma}_{BR}$ and $\hat{\beta}_{BR}$) the score test statistic converges uniformly as in (\ref{Main res2}) %under Assumptions \ref{moment_drt}, \ref{p_error}\ref{gamma_p_error}, \ref{p_error_w}\ref{beta_br_p_error} and \ref{hlms} and 
under $H_0$, Assumptions \ref{moment_drt}-\ref{bounded_res} and the conditions \ref{S_sum} $(s_\gamma+s_\beta)\log(p\lor n)=o(n)$ and 
\begin{enumerate}[label=(\roman*)]\addtocounter{enumi}{3}
%\item $(s_\gamma+s_\beta)\log(p\lor n)=o(n)$ \label{S_sum}
\item $s_\gamma s^*\log^2 (p\lor n)=o(n)$ \label{S_prod_br}
\end{enumerate} where $s^*=s_\gamma \lor s_\beta$.
\end{theorem}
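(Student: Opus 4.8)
The plan is to exploit the fact that, when $m(L;\beta)=\beta^TL$ is linear, the score $U(\eta)=\{A-\pi(L;\gamma)\}\{Y-\beta^TL\}$ is exactly bilinear in the two residuals, so expanding $n^{-1/2}\sum_i U_i(\hat\eta_{BR})$ about the population value $\eta_n=(\gamma_n^T,\beta_n^T)^T$ produces no higher-order Taylor remainder on the outcome side. Writing $\Delta\pi_i=\pi(L_i;\hat\gamma_{BR})-\pi(L_i;\gamma_n)$ and $\Delta m_i=(\hat\beta_{BR}-\beta_n)^TL_i$, I would first record the exact identity
\[
\frac{1}{\sqrt n}\sum_{i=1}^n U_i(\hat\eta_{BR})=\frac{1}{\sqrt n}\sum_{i=1}^n U_i(\eta_n)+T_1+T_2+T_3,
\]
with $T_1=-n^{-1/2}\sum_i\{A_i-\pi(L_i;\gamma_n)\}\Delta m_i$ the outcome cross term, $T_2=-n^{-1/2}\sum_i\Delta\pi_i\{Y_i-\beta_n^TL_i\}$ the exposure cross term, and $T_3=n^{-1/2}\sum_i\Delta\pi_i\Delta m_i$ the product term. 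The leading term is then shown to be asymptotically standard normal after division by the denominator of $T_n$: under the null the two factors of $U(\eta_n)$ are conditionally independent given $L$ (the factorization noted earlier), so a triangular-array Lindeberg argument under Assumption \ref{moment_drt} delivers normality, and the empirical variance in the denominator is shown to be consistent after plugging in $\hat\eta_{BR}$ using the rates of Assumptions \ref{moment_drt} and \ref{l_error_w}. It remains to show $T_1,T_2,T_3=o_P(1)$ uniformly over the intersection submodel laws in $\mathcal{P}^*$.

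The key gain over Theorem \ref{theorem2_drt} is in the outcome cross term, which I would rewrite as $T_1=\sqrt n\,G_\beta^T(\hat\beta_{BR}-\beta_n)$ with $G_\beta=-n^{-1}\sum_i\{A_i-\pi(L_i;\gamma_n)\}L_i$. Under the intersection submodel $E[\{A-\pi(L;\gamma_n)\}\mid L]=0$, so $G_\beta$ is a mean-zero average that is measurable with respect to $(A,L)$; because the outcome model is linear and we refit, the refitted $\hat\beta_{BR}$ is, conditionally on $(A,L)$, an approximately unbiased weighted least-squares functional of the outcome noise $Y-\beta_n^TL$ on a support of cardinality $O(s_\beta)$. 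Conditioning on $(A,L)$ therefore kills the deterministic part of $T_1$ and leaves a mean-zero quantity whose conditional variance is $O(s_\beta\log(p\lor n)/n)$; hence $T_1=o_P(1)$ under the mild condition \ref{S_sum} rather than a quadratic-in-$s_\beta$ requirement. This is precisely where linearity of $m$ and Assumption \ref{bounded_res} enter: they supply the conditional mean-zero property and the sub-Gaussian control of the outcome residuals needed for the conditional-variance bound.

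The exposure cross term $T_2$ and the product term $T_3$ are what force condition \ref{S_prod_br}. For $T_3$, a Cauchy--Schwarz bound against the weighted empirical Gram matrix gives $|T_3|=O_P(\sqrt n\,\|\hat\gamma_{BR}-\gamma_n\|_2\,\|\hat\beta_{BR}-\beta_n\|_2)=O_P(\sqrt{s_\gamma s_\beta}\,\log(p\lor n)/\sqrt n)$, which is $o_P(1)$ exactly when $s_\gamma s_\beta\log^2(p\lor n)=o(n)$, the $\ell_2$-rates coming from Assumption \ref{l_error_w}. For $T_2$, because $\pi$ is a nonlinear (logistic) link, $\hat\gamma_{BR}$ cannot be rendered conditionally unbiased by refitting, so the conditional-variance shortcut used for $T_1$ is unavailable; instead I would invoke the bias-reduction estimating equation \eqref{br_beta}, which forces $n^{-1}\sum_i\partial U_i(\hat\eta_{BR})/\partial\gamma$ to $O(\lambda_\beta)$ in $\ell_\infty$, and combine it with $\|\hat\gamma_{BR}-\gamma_n\|_1=O_P(s_\gamma\sqrt{\log(p\lor n)/n})$ and the second-order Taylor remainder of $\pi$, obtaining $|T_2|=O_P(s_\gamma\log(p\lor n)/\sqrt n)$, negligible under $s_\gamma^2\log^2(p\lor n)=o(n)$. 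Since $s^*=s_\gamma\lor s_\beta$, the two requirements $s_\gamma s_\beta\log^2(p\lor n)=o(n)$ and $s_\gamma^2\log^2(p\lor n)=o(n)$ combine to exactly $s_\gamma s^*\log^2(p\lor n)=o(n)$, i.e.\ condition \ref{S_prod_br}, while \ref{S_sum} covers consistency and the outcome-side term.

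I expect the main obstacle to be the rigorous treatment of $T_1$ under only \ref{S_sum}: one must show that the post-selection bias of the refitted outcome estimator does not contaminate the linear functional $G_\beta^T(\hat\beta_{BR}-\beta_n)$, uniformly in $P_n$. The difficulty is that the selected support of $\hat\beta_{BR}$ depends on the outcome, so conditioning on $(A,L)$ does not fix it; controlling the interaction between the data-dependent support, the refitting and the mean-zero multiplier $G_\beta$ --- without imposing a beta-min condition --- is the delicate step, and is where I would rely most heavily on the Post-Lasso sparsity and prediction-error results cited in the excerpt. Securing uniformity (the supremum over $\mathcal{P}^*$) rather than pointwise convergence is the second source of technical work, handled by making every stochastic bound depend on $P_n$ only through $(s_\gamma,s_\beta,p,n)$ and the constants in Assumptions \ref{moment_drt}--\ref{bounded_res}.
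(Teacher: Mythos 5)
Your decomposition into the leading term plus the three cross terms matches the paper's ($T_1,T_2,T_3$ are its $R_1,R_2,R_3$), and you correctly locate where the gain over Theorem \ref{theorem2_drt} must come from: the outcome cross term $T_1$ must be killed under condition \ref{S_sum} alone. But the mechanism you propose for $T_1$ does not work, and you have flagged the obstruction yourself without supplying the idea that removes it. Conditioning on $(A_i,L_i)^n_{i=1}$ and treating $\hat{\beta}_{BR}-\beta_n$ as conditionally (approximately) unbiased fails for two reasons: the Lasso/refitted estimator carries a shrinkage and post-selection bias whose $\ell_1$ magnitude is of order $s_\beta\sqrt{\log(p\lor n)/n}$, and the selected support is a function of $Y$, so it is not fixed by your conditioning. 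If you fall back on bounding the bias term directly, you get $\sqrt{n}\,\norm{G_\beta}_\infty\norm{\hat{\beta}_{BR}-\beta_n}_1\asymp s_\beta\log(p\lor n)/\sqrt{n}$, which reintroduces ultra-sparsity in $\beta$ --- exactly what the theorem is meant to avoid; making the conditional-unbiasedness claim rigorous instead would require a beta-min/perfect-selection condition that the paper explicitly does not assume (see the remark after Assumption \ref{hlms}).

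The paper's proof conditions the other way around. It writes $R_1$ with the mean-zero multiplier $A_i-\pi(L_i;\gamma_n)$ and conditions on $(Y_i,L_i)^n_{i=1}$; the obstacle is then only that $\hat{\beta}_{BR}=\hat{\beta}(\hat{\gamma})$ depends on $(A_i)^n_{i=1}$ through the estimated weights. It therefore introduces the oracle-weighted estimator $\hat{\beta}(\gamma_n)$, which is measurable with respect to $(Y_i,L_i)^n_{i=1}$, splits $R_1=R_{1a}+R_{1b}$, disposes of $R_{1a}$ by the same Chebyshev argument as in Theorem \ref{theorem1_drt} under condition \ref{S_sum} and Assumption \ref{p_error_w}, and bounds $R_{1b}$ by $\sqrt{n}\,\norm{\mathbb{E}_n[\{A_i-\pi(L_i;\gamma_n)\}L_i]}_\infty\norm{\hat{\beta}(\hat{\gamma})-\hat{\beta}(\gamma_n)}_1$. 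The whole technical weight is carried by Lemma \ref{lemma1}, which controls $\norm{\hat{\beta}(\hat{\gamma})-\hat{\beta}(\gamma_n)}_1$ using Assumptions \ref{strong_cons}--\ref{hlms} and the restricted-eigenvalue machinery; this is where condition \ref{S_prod_br} and Assumption \ref{bounded_res} actually enter, not through $T_2$ and $T_3$ as you assert. (Your treatment of $T_2$ is also heavier than needed: under the intersection submodel $\hat{\gamma}_{BR}=\hat{\gamma}$ depends only on $(A_i,L_i)^n_{i=1}$ and $E\{Y-m(L;\beta_n)\mid A,L\}=0$, so $R_2=o_{P_n}(1)$ follows from the easy conditional-mean-zero argument under \ref{S_sum} alone; and $R_3$ needs only $s_\gamma s_\beta\log^2(p\lor n)=o(n)$, which \ref{S_prod_br} implies.) Without a lemma of the type of Lemma \ref{lemma1} --- or an equivalent device isolating the effect of estimating the weights --- your argument for $T_1$ cannot be completed under condition \ref{S_sum}.
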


\begin{remark}
\normalfont 
In the context of linear models for $Y$, our proposal is thus `sparsity-adaptive', in the sense that when both models are correct, our proposal is valid under conditions similar to those required for Theorem \ref{theorem1_drt}. As the example in Section \ref{cont1} shows, estimated weights dependent on $(A_i)^n_{i=1}$ are only required in this setting when fitting the outcome model; the proof of Theorem \ref{theorem4_drt} hinges on showing that estimating the weights is of lesser impact than estimating $\beta$ at fixed weights.  For non-linear outcome models, an equivalent result will be difficult to obtain, in light of the fact that fitting the exposure model also requires weights that are dependent on $(Y_i)^n_{i=1}$; however, a general result could be shown using sample-splitting (by estimating the weights in a sample separate to the one used in constructing the test statistic). Nonetheless, this illustrates the trade off between modelling and sparsity conditions; if we wish to obtain inference under the union model then we generally need stronger conditions on $s_\gamma$ or $s_\beta$.
\end{remark}

\begin{remark}
\normalfont The sparsity conditions required for Theorem \ref{theorem4_drt} reduce exactly to those of Theorem \ref{theorem1_drt} (and thus \citet{chernozhukov_double/debiased_2018}) when $s_\gamma \leq s_\beta$. However, the converse does not hold; even if $s_\beta=0$, strong assumptions on $s_\gamma$ are still required. This reflects that estimation of $\beta$ could be harmed by poor quality estimates of the weights. We do not see this asymmetry as a serious disadvantage as in many settings, we would expect that model $\mathcal{A}$ is more likely to be sparse. 
\end{remark}

\section{Simulation Study}\label{simulations}
In this section, we conduct a simulation analysis to compare the performance of the proposed hypothesis test with that of different tests of the causal null hypothesis. In our study, we consider the following tests for the null hypothesis (\ref{cond_ind}): 
\begin{enumerate}
\item A na\"ive post-selection approach where a $t$-test is considered for the null using a linear regression after the standard post-selection of $\beta$ based on $\ell_1$-penalized linear regression. We study the performance of this approach both when the exposure is forced (i.e., the treatment effect $\psi$ is not penalised)  and not forced to be included in the model.

\item Tests based on the `post-double selection' (PDS) and `partialling out' (PO) approaches proposed in the econometrics literature by \citet{belloni_inference_2014}, where $Y$ is regressed on $L$ and $A$ on $L$ also (both using linear models with a Lasso penalty).  PDS then uses a second-stage regression where a linear model for $Y$ is fitted using ordinary least squares, adjusted for $A$ and the union of the covariates selected at the first stage, whereas PO emulates the estimator from \citet{robinson_root-n-consistent_1988} for the partially linear model, also using refitting. Both approaches were implemented in \texttt{R} using the package \texttt{hdm}.

%For the Double Selection  and Partialling Out methods (yielding the estimators $\hat{\psi}_{PDS}$ and $\hat{\psi}_{PO}$ respectively), $Y$ was regressed on $L$ and $A$ on $L$ also (both using linear models with a Lasso penalty).  The former then uses a second-stage regression where a linear model for $Y$ is fitted using ordinary least squares, adjusted for $A$ and the union of the covariates selected at the first stage, whereas the latter emulates the estimator from \citet{robinson_root-n-consistent_1988} for the partially linear model, also using refitting. 

\item The procedure described in Section \ref{nuis_cor}, valid under model $\mathcal{A}\cap\mathcal{B}$, where a score test is considered using standard logistic regression and linear regression after the post-selection of parameters $\gamma$ and $\beta$ based on $\ell_1$-penalized logistic regression and $\ell_1$-penalized linear regression, respectively (hereafter, PMLE-DR).

\item The procedure described in Section \ref{nuis_mis}, valid under model misspecification,  where a score test is considered for the null using standard logistic regression and weighted linear regression after the post-selection of parameters $\gamma$ and $\beta$ based on $\ell_1$-penalized logistic regression (\ref{gamma_est_drt}) and $\ell_1$-penalized weighted linear regression (\ref{beta_est_drt}), respectively (hereafter, BR-DR).
\end{enumerate}

%Next, after obtaining $\hat{\gamma}$ and $\hat{\beta}$, we refit the corresponding models through the following post-selection techniques: optimal instruments (OI), where we refit the working models using the selected covariates, i.e., $\support(\gamma)\subseteq \support(\hat{\gamma})$ and $\support(\beta)\subseteq \support(\hat{\beta})$ respectively, and double selection (DS), where we refit the working models using the union of the selected covariates, i.e., $ \support(\hat{\beta})\cup\support(\hat{\gamma})$. These techniques have been previously applied by \citet{farrell_robust_2015} and \citet{belloni_post-selection_2016}. Finally, we use the refitted estimators of nuisance parameters to obtain $\hat{\psi}$ through (\ref{theta_est}). 
 
%We evaluate the performance of a given estimator $\hat{\psi}$ using the following measures: Monte Carlo Bias, Root Mean Square Error (RMSE), Median of Absolute Errors (MAE), Monte Carlo Standard Deviation (MCSD), Average of Sandwich Standard Errors (ASSE), Monte Carlo Coverage of 95$\%$ Wald based and score test based confidence intervals. 

Note that all the considered approaches require the selection of penalty parameters. In our simulation study, we use 10-fold cross validation technique to choose the tuning parameters for the na\"ive approach as well as for PMLE-DR and BR-DR. We obtain $\lambda_\gamma$ and $\lambda_\beta$ using \texttt{R} package \texttt{glmnet} through the argument \texttt{lambda.min}; this selects the value which minimises the mean cross-validated error. In the \texttt{hdm} package, PDS and PO and implemented using pre-specified values for the penalty parameters \citep{chernozhukov_hdm:_2016};
  %In our study, we also include the post-double selection (PDS) and `partialling out' (PO) approaches proposed in the econometrics literature by \citet{belloni_post-selection_2016}, which are implemented in \texttt{R} package \texttt{hdm} and rely on pre-specified values for the penalty parameters \citep{chernozhukov_hdm:_2016}. 
in order to study the impact of using different penalties, we also performed PDS as described in \citet{belloni_inference_2014} using cross-validation instead of the pre-specified values. 

In the simulation analysis, we generate $n$ mutually independent vectors $\{(Y_i,A_i,L^T_i)^T\}$, $i=1,...,n$. 
Here, ${L}_i=(L_{i,1}, ..., L_{i,p})$ is a mean zero multivariate normal covariate with covariance matrix $\Sigma$;  either $\Sigma=\textnormal{I}_{p\times p}$ (uncorrelated covariates) or $\Sigma=[\sigma_{i,j}]_{1\leq i,j\leq p}$, where $\sigma_{i,j}=2^{-1-|i-j|}$ (correlated covariates). For simplicity we consider a binary exposure model and linear outcome model. We let for each $i=1,...,n$, the dichotomous exposure $A_i$ take on values 0 or 1 with $P(A_i=1|L_i)\equiv \pi({L}_i)$, the outcome $Y_i$ be normally distributed with mean $ m({L}_i)$ and unit variance, conditional on $L_i$  and $A_i$. Further, the simulated data are analysed using the following parametric working models: $\pi(L;\gamma)=\textnormal{expit}(\gamma_0+\displaystyle\sum_{i=1}^p\gamma_iL_i)$ and $m(L,\beta)=\beta_0+\displaystyle\sum_{i=1}^p\beta_iL_i$ , where $\beta_0=1$, $\gamma_0=2$. The nuisance parameters $\beta=(\beta_1, ..., \beta_p)\in \mathbb{R}^{p}$ and $\gamma=(\gamma_1, ..., \gamma_p)\in \mathbb{R}^{p}$ are defined as\\
$b=\left(2\frac{\log(20)}{n^{1/2}}, 2\frac{\log(19)}{ n^{1/2}}, ...,2 \frac{\log(2)}{n^{1/2}},0_{20},...,0_{81}, 10 \frac{\log(2)}{n^{1/2}},...,10\frac{\log(20)}{n^{1/2}},0_{101},...,0_{p}\right)$, $\beta=2 \cdot b \cdot \left({\sum_{i=1}^p b^2}\right)^{-1/2}$, $g=\left(40\frac{\log(20)}{n^{1/2}}, 40 \frac{\log(19)}{n^{1/2}}, ...,40 \frac{\log(2)}{n^{1/2}},0_{20},...,0_{p}\right)$,
$\gamma={3} \cdot g \cdot \left({\sum_{i=1}^p g^2}\right)^{-1/2}$
where the subscripts indicate the index (i.e., position) of $0$ in the vector. The considered settings for nuisance parameters are challenging in the sense that there are confounders that are strongly predictive  of the exposure and weakly predictive of the outcome. Moreover, there are covariates which are moderately predictive of the outcome but are not associated with the exposure. In order to evaluate the impact of model misspecification, we next generate data with the following outcome model: $m(L, \beta)=1+\beta^T\left(|L_{.,[1:3]}|; L_{.,[4:p]}\right)$. Finally, for the data generating mechanism described above, we perform 1,000 Monte Carlo runs for $n=200$ and $p=200$, $n=500$ and $p=500$, $n=200$ and $p=100$, and $n=200$ and $p=250$.% In Appendix BLAH, we also consider additional settings under the modified sparsity in the propensity score model, as well as heteroscedasticity.

Tables \ref{Table1} and \ref{Table2} show the Type I errors based on 1,000 replications. The simulation results show that the PMLE-DR and BR-DR approaches have rejection rates close to the nominal level of $5\%$, so long as the outcome model is correctly specified. On the other hand, we observe that even when both models are correctly specified, the na\"ive approaches provide high rejection rates. Moreover, these rates do not diminish with larger sample size. This poor performance is well aligned with the theory of \citet{leeb_model_2005}. We also observe that the rejection rates of PDS and PO are relatively high. Part of this poor performance appeared to be due to the particular data-driven procedure for selecting the penalty parameters, which led to an insufficient number of covariates being selected. However, even when using cross validation for PDS, there was still a discrepancy between the methods. When the covariates were correlated and the outcome model was incorrect, the PMLE-DR test was mildly anti-conservative relative to the BR-DR test; the fact that PDS (which is not generally doubly robust) performed relatively well in this setting indicates that the type of misspecification considered may not be particularly damaging.  In Appendix C, we also consider additional settings under the modified sparsity in the propensity score model, as well as heteroscedasticity; similar results to those in \ref{Table1} and \ref{Table2} are seen across settings.

%The PMLE-DR test can be anti-conservative relative to the BR-DR test when covariates were correlated and the outcome model is incorrect.

%PMLE anti conservative relative to BR-DR. The fact that DS is doing well reflects the misspecification is not particularly damaging. A greater disparity is seen under ….

 %The PMLE-DR and BR-DR approaches appeared to have additional robustness to selection errors in the outcome model (by using a correct propensity score model) and to sparsity violations (\citet{belloni_inference_2014} require ultra-sparsity in each model, even when there is no misspecification).

\begin{table}[h]
\centering
   \caption{ Type I errors at the 5\% significance level based on 1,000 replications: $\Sigma=\textnormal{I}_{p\times p}$.}
   \label{Table1}
   \begin{tabular}{l  r r r r r r r}
\\
    \textit{Correct models} &&\\

 \hline
   			   & $n=200$   &   &  $n=500$  & & $n=200$ & & $n=200$\\  
    Methods    &  $p=200$ &    &  $p=500$ & & $p=100$ & &  $p=250$ \\ \hline
Standard na\"ive (forced) & 0.548 & & 0.809 & & 0.291  && 0.600 \\  
Standard na\"ive (not forced) & 0.275 & & 0.555 & & 0.168 && 0.316\\ 
PDS (pre-specified) & 0.517 & & 0.761 & & 0.536 && 0.498\\
PO (pre-specified) & 0.508 & & 0.748 &&  0.519 && 0.482\\
PDS (CV) & {0.074} && {0.074}  & & 0.068 &&0.072\\
%PDS (CV) & 0.061 & & 0.066 %MBSE\\
PMLE-DR & 0.055 & & 0.054 & & 0.070 && 0.075\\ 

BR-DR  & 0.053 & & 0.069  & & 0.078 && 0.073 \\\hline\\
%PMLE-DR & 0.055 & & 0.077 \\ 
%BR-DR  & 0.053 & & 0.069 \\\hline\\

\textit{Incorrect outcome model} &&\\
\hline
 & $n=200$  & & $n=500$  && $n=200$ && $n=200$ \\ 
Methods  &  $p=200$ &  &  $p=500$  && $p=100$ && $p=250$\\ \hline
Standard na\"ive (forced) & 0.368 && 0.586 && 0.210 && 0.425\\  
Standard na\"ive (not forced) & 0.175 && 0.313&& 0.115  && 0.197\\
PDS (pre-specified) & 0.319 && 0.634&& 0.345 &&0.327 \\
PO (pre-specified)  & 0.317 & &0.615&& 0.345 && 0.309\\
PDS (CV)  & {0.073} && {0.072}&& 0.060 && 0.070\\
%PDS (CV) & 0.066 & 0.054& %MBSE\\

PMLE-DR & 0.056 && 0.053&& 0.070 && 0.059\\
BR-DR  & 0.046 && 0.059&& 0.081 && 0.050 \\\hline %\hline
%PMLE-DR & 0.056 && 0.066 \\
%BR-DR  & 0.041 && 0.072 \\\hline 

\end{tabular}
%\vspace*{-6pt}
\end{table}

\begin{table}[h]
\centering
   \caption{ Type I errors at the 5\% significance level based on 1,000 replications: $\Sigma=[\sigma_{i,j}]_{1\leq i,j\leq p}$.}
   \label{Table2}
   \begin{tabular}{l  r r r r r r r}
\\
   		\textit{Correct models} &&\\
   		
   		\hline
   		& $n=200$   &   &  $n=500$  & & $n=200$ & & $n=200$\\  
   		Methods    &  $p=200$ &    &  $p=500$ & & $p=100$ & &  $p=250$ \\ \hline
   		Standard na\"ive (forced) & 0.448 & & 0.387 & & 0.190  && 0.522 \\  
   		Standard na\"ive (not forced) & 0.178 & & 0.164 & & 0.100 && 0.193\\ 
   		PDS (pre-specified) & 0.143 & & 0.064 & & 0.103 && 0.131\\
   		PO (pre-specified) & 0.115 & & 0.064 &&  0.085 && 0.108\\
   		PDS (CV) & 0.066 & & 0.055 & & 0.059 &&0.075\\
   		%PDS (CV) & 0.061 & & 0.066 %MBSE\\
   		PMLE-DR & 0.057 & & 0.063 & & 0.063 && 0.047 \\ 
   		BR-DR  &0.043 & & 0.049 & & 0.051 && 0.046\\\hline\\
   		%PMLE-DR & 0.055 & & 0.077 \\ 
   		%BR-DR  & 0.053 & & 0.069 \\\hline\\

   		\textit{Incorrect outcome model} &&\\
   		\hline
   		& $n=200$  & & $n=500$  && $n=200$ && $n=200$ \\ 
   		Methods  &  $p=200$ &  &  $p=500$  && $p=100$ && $p=250$\\ \hline
   		Standard na\"ive (forced) & 0.312 && 0.277 && 0.144 && 0.363\\  
   		Standard na\"ive (not forced) & 0.139 && 0.113 && 0.070  && 0.140\\
   		PDS (pre-specified) & 0.094 && 0.057 && 0.079 &&0.091 \\
   		PO (pre-specified)  & 0.077 & & 0.050 && 0.060 && 0.077\\
   		PDS (CV)  & 0.071 && 0.061 && 0.057 && 0.073\\
   		%PDS (CV) & 0.066 & 0.054& %MBSE\\
   		
   		PMLE-DR & 0.063 && 0.057 && 0.060 && 0.056 \\
   		BR-DR  & 0.030 && 0.051 && 0.044 && 0.041 \\\hline %\hline
   		%PMLE-DR & 0.056 && 0.066 \\
   		%BR-DR  & 0.041 && 0.072 \\\hline 

\end{tabular}
%\vspace*{-6pt}
\end{table}

\section{Data analysis}\label{data_analysis}

Glycemic control in critically ill patients is still the subject of controversy, in terms of the optimal limits in which glucose levels are best kept. In the Leuven II randomised trial \citep{van_den_berghe_intensive_2001}, strict glycemic control (with the maintenance of glycemia between 80 and 110 milligram per deciliter (mg/dl)) resulted in reduced mortality. Later multi-center studies could not replicate these findings, including the NICE-SUGAR trial \citep{finfer_intensive_2009}. Current guidelines usually recommend glycemic control between 140 and 180mg/dl. In the Ghent University Intensive Care Unit (UZ Ghent ICU) a glycemic protocol is used, targeting values between 80 and 150 mg/dl. In practice, glycemia in patients often falls outside of this range, partly due to a lack of compliance in following the protocol. We sought to investigate the relationship between glycemic control and 30-day mortality, using routinely collected data from the UZ Ghent ICU on a large representative cohort of intensive care patients. Specifically, we aimed to test the null hypothesis of no effect of a change in glycemia level (from $<$110  to $\geq$110 mg/dl, and then from $\leq$150 to $>$150 mg/dl) at any day of follow-up on death within 30 days from ICU entry. We restricted the analysis to patients that were alive in the ICU for at least 48 hours, thus removing patients who died immediately upon arrival in the ICU. %Specifically, using the index $t$ to denote the time of the measurement (with $t=3,4,...,30$), then $Y$ indicates whether a patient died within   %We also tested for the effect of a change in glycemia level at day 10 on death within 30 days of ICU entry (so death within 20 days of day 10).

Data were obtained from the electronic patient data management system of the UZ Ghent ICU. The potential confounders were split up into variables assessed at admission into the intensive care unit and variables where data were collected over time. For covariates that were measured repeatedly, we took the mean of the measurements taken within the previous 48 hours to the considered day of follow up in the ICU for continuous covariates, and the maximum value for categorical covariates. Measurements on glycemia were usually recorded multiple times per day, so in order to create the exposure, we took the mean of the measurements from within the first 6 hours of the day. For this illustration, any patients with missing data on the exposure, outcome or confounders were removed from the dataset. In order to perform our test, at each day we assumed (in individuals still alive) a logistic regression model for the probability of glycemia level $\geq$110 mg/dl (or $>$150 mg/dl) as well as a logistic model for death within 30 days of entering hospital. To avoid the issues associated with time-varying confounding described e.g. in \citet{robins_causal_1997}, in each regression model we adjusted only for covariate data, as well as previous exposures, collected prior to the glycemia measurements on a given day. We then used an amended version of the test for binary outcomes described in Section \ref{bin1} (implemented using Algorithm 1 described in Appendix C), allowing for potential misspecification in either the exposure or outcome model. Given that the data consisted of multiple observations per individual, then letting $t$ denote a particular day,  we replaced $U_i(\hat{\eta})$ with $\sum_t U_{it}(\hat{\eta})$ in the statistic $T_n$ (for $t=3,...,30$). In our modeling, we included all confounders selected by clinical experts, as well as quadratic terms of continuous variables and all two-way interactions between main effects. 

We obtained data on 12,105 patients entering the intensive care unit; after restricting to patients still alive at day 3, 10,885 individuals remained. Further removing patients entering prior to 2013 left 4,682 individuals, with a final dataset of 4,120 after removing those with missing data. Given that patients were assessed on multiple days, there were 24,863 observations in the dataset; the median number of contributed observations was 3 (the minimum was 1 and the maximum 28). In this final cohort, 768 (18.6\%) of individuals died in hospital within 30 days of entering the ICU. Considering the mean glycemia values for patients within the first 6 hours of day 3, the average of these values among all patients was 131.6 (minimum: 45, maximum: 492). 927 (23.3\%) patients had mean glycemia at day 3 $<$110 mg/dl, 2,208 (55.5\%) had a level $\geq$110mg/dl and $\leq$150mg/dl and 841 (21.2\%) had a level $>$150 mg/dl. After generating interactions, there were 148 covariates to adjust for in the analysis. Looking at a change at each day from $<$110 to $\geq$110 mg/dl, the test statistic $T_n$ was -1.42 with a $p$-value of 0.156 whereas changing from $\leq$150 to $>$150 mg/dl gave a test statistic of 6.98 ($p<$0.001). Hence, at the 5\% level, we saw evidence of a difference in 30 day mortality based on a change from moderate ($\leq$150mg/dl) to high ($>$150 mg/dl) glycemia levels on a given day. On the other hand, in comparing those with low ($<$110mg/dl) vs. higher ($\geq$110 mg/dl) glycemia levels, we did not observe a statistically significant difference at the 5\% level.

\section{Discussion}\label{discussion}

We have proposed a general framework for constructing uniformly valid tests of GLM parameters in high-dimensional settings. We hope to have clarified why locally doubly robust methods (in this case, doubly robust under the null) have a privileged position in the literature \citep{farrell_robust_2015}; if all working models are correct, one can obtain a uniformly valid test by plugging in any sufficiently fast-converging sparse estimator of the nuisance parameters. If one of the working models is misspecified, then one can still obtain uniformly valid inference, so long as a specific estimation procedure for the nuisance parameters is used. We have also indicated why score tests might be preferable in high-dimensional settings, since then the outcome model can be fit under the null hypothesis, enabling one to weaken conditions on sparsity. %When choosing a working outcome model, we have focused on the identity and logit link functions. In theory one could choose other links, so long as the corresponding penalized estimators of the nuisance parameters can be shown to converge sufficiently quickly.  

In future work, we will extend our procedures to the estimation of regression parameters and the construction of confidence intervals. Consider the model $\mathcal{M}$ defined by the restriction \[g\{E(Y|A=a,L=l)\}-g\{E(Y|A=0,L=l)\}=\psi a\] where $g(\cdot)$ is a known link function. The score $U(\eta)$ implies an estimator of $\psi$, the conditional causal effect of $A$ on $Y$. Let $H(\psi)=Y-\psi A$ when $g()$ is the identity link and $H(\psi)=Y\exp(-\psi A)$ when $g()$ is the log link; then estimation of $\psi$ can be based on the function
\begin{align}\label{dr_ef}
U(\psi,\eta)=\{A-\pi(L;\gamma)\}\{H(\psi)-m(L;\beta)\}
\end{align}
\citep{robins_estimating_1992}. An estimator of $\psi$ based on (\ref{dr_ef}) is consistent under model $\mathcal{M}\cap(\mathcal{A}\cup\mathcal{B})$. The goal of constructing uniformly valid confidence intervals could require a revision of the conditions given in Sections \ref{asymptotics1} and \ref{asymptotics2}, since we are no longer working under the null. It also remains an open question for which settings doubly robust estimators can be constructed. For example, there currently exists no doubly robust estimator for the Cox proportional hazards model or probit models. In practice it may be more feasible to construct estimators and confidence intervals that are locally doubly robust e.g. under the null, and in this context enjoy the properties of the tests described in this paper.

When $\psi$ is multivariate, equations (\ref{br_gamma}) and (\ref{br_beta}) deliver more estimating equations than there are unknown nuisance parameters. To ensure that standard errors are valid, one would also need to ensure that the estimating functions of each component of $\psi$ are orthogonal to those of the remaining components. Such a development would not only be advantageous in terms of testing for and estimating interaction terms, but also for obtaining uniformly valid inference in high-dimensional data with mediators and/or time dependent confounders. Indeed, the estimators described above are special cases of \textit{g-estimators}  \citep{robins_estimating_1992}, developed for fitting \textit{structural nested models} in complex longitudinal studies. Because it turns out to be essentially impossible to correctly specify sequential regression models for an outcome, it is unlikely that existing proposals for high-dimensional inference can be adapted to test the hypothesis of no causal effect of any treatment \textit{regime} on $Y$ a.k.a the \textit{g-null hypothesis} \citep{robins_causal_1997}. In contrast, although we perform selection on both the outcome and exposure models (in order for the relevant gradients to be set to zero), in the proposal of Section \ref{nuis_mis} only the latter needs to be correctly specified in order to obtain a valid test of the g-null. 

\section*{Acknowledgements}
Oliver Dukes is supported by the Strategic Basic Research PhD grant 1S05916N from the Research Foundation - Flanders (FWO).  Vahe Avagyan and Stijn Vansteelandt are supported by the FWO research project G016116N and Special Research Fund (BOF) research project BOF.244.2017.0004.01. The authors are grateful to Prof. Johan Decruyenaere and Prof. Kirsten Colpaert for permission to use the Ghent University Intensive Care Unit dataset, and to Bram Gadeyne and Martijn Busselen for assistance in managing the data.

\newpage

\appendix
\numberwithin{equation}{section}

\section{Appendix A}\label{appA_drt}
%\subsection{Assumptions}
%\subsection{Proofs of main results}

Define $\mathbb{N}_{p}=\{1,2,...,p\}$. We use $\mathbb{E}_{P_n}[]$ for taking expectation w.r.t. the local data generating process (DGP), whereas $\mathbb{E}_n[]$ refers to sample expectations. Similarly, $\mathbb{P}_{P_n}[]$ and  $\var_{P_n}[]$ denote probabilities and variances taken w.r.t. the local DGP respectively. %For a vector $a$ and index set $S\subseteq \{1,...,p\}$, the $S$ sub-vector is denoted as $a_S\in \mathbb{R}^{|S|} $ and consists of the entries $a_i$, such that $i\in S$. %For any matrix $A\in \mathbb{R}^{p\times p}$ and index sets $S_1$ and $S_2$, the $(S_1,S_2)$ block-matrix is denoted as $A_{S_1S_2}\in \mathbb{R}^{|S_1|\times |S_2|} $ and consists of entries $A_{ij}$, such that $i\in S_1$ and $j\in S_2$. 
%For any set $S$, we denote $\bar{S}$ and $|S|$ as complement and cardinality of the set $S$, respectively. %For any symmetric matrix $A$, we denote its maximum and minimum eigenvalues as $\Psi_{\max}(A)$ and $\Psi_{\min}(A)$, respectively. 
We define the $\ell_\infty$ norm of any matrix $A$ as $||A||_{\infty}=\displaystyle\max_{i,j}|A_{ij}|$. % and use $a\lor b=\max\{a,b\}$.  
For a vector $\delta\in\mathbb{R}^p$ and indices $T\subset\{1,...,p\}$, let $\delta_T$ denote the vector where $\delta_{Tj}=\delta_j$ if $j\in T$ and $\delta_{Tj}=0$ otherwise. Also, $T^C=\{1,...,p\}\backslash T$. For sequences $a_n$ and $b_n$, we use $a_n\lesssim b_n$ to denote that $a_n\leq Cb_n$ for some constant $C$ ($a_n\gtrsim b_n$ is similarly defined). We view the gradients $\partial U_i(\hat{\eta})/\partial \eta$ and $\partial U_i(\eta_n)/\partial \eta$ as row vectors.

\subsection{Proof of Theorem \ref{theorem1_drt}}

In order to prove Theorem 1, we will rely on the following assumptions.

\begin{assumption}\label{moment_drt}
(Data generating process) There exist constants $C_1,  C_2, C_3<\infty$,  $c_4, c_5>0$ and $4<r<\infty$ such that:
\begin{enumerate}[label=(\roman*)]
\item $\mathbb{E}_{P_n}\{|Y-m(L;\beta_n)|^{4}|L\}\leq C_1$ w.p. 1. \label{m1_drt}
\item $\mathbb{E}_{P_n}\{|Y-m(L;\beta_n)|^r\}\leq C_2$. \label {m2_drt}
\item $\max_{i\leq n}\norm{L_i}_\infty \leq C_3<\infty$ w.p. 1. \label{m3_drt}
\item $c_4\leq\mathbb{E}_{P_n}[\{A-\pi(L;\gamma_n)\}^{2}|L]$ and $c_5\leq\mathbb{E}_{P_n}[\{Y-m(L;\beta_n)\}^{2}|L]$ w.p. 1. \label{m4_drt}
\end{enumerate}
\end{assumption}
%May need to change these to involve weights.

\begin{remark}
\normalfont Assumption \ref{moment_drt}\ref{m1_drt} allows one to bound the conditional variance of $Y-m(L;\beta_n)$ given $L$ and also implies a bound on the variance given $A$ and $L$. Assumption \ref{moment_drt}\ref {m2_drt} places a bound on the higher order moments of $Y-m(L;\beta_n)$, and is required to show uniform consistency of the variance estimator of $U(\hat{\eta})$ and uniform asymptotic normality of the test statistic. We note that Assumptions \ref{moment_drt}\ref{m1_drt}-\ref {m2_drt} allow for non-Gaussianity and heteroscedasticity with respect to the error term $Y-m(L;\beta_n)$. Assumption \ref{moment_drt}\ref{m3_drt} requires $L$ to be restricted to a bounded set, which is an assumption commonly made in the literature (sometimes as a primitive condition for proving consistency of estimators) \citep{van_de_geer_asymptotically_2014,farrell_robust_2015,ning_general_2017}. %It follows from \citet{belloni_inference_2014} that one may be able to prove the main result under weaker conditions on $L$, but we use \ref{moment_drt}\ref{m3_drt} as it is easy to understand and because in most applications the covariates will naturally be bounded anyhow.  
Assumption \ref{moment_drt}\ref{m4_drt} places additional bounds on the conditional variance, and implies a type of `positivity' condition such that there must be some variation in $A$ at different levels of $L$.
\end{remark}

\begin{assumption}\label{p_error}
(Rates for prediction error with unweighted estimators) 
\begin{enumerate}[label=(\roman*)]
\item $\mathbb{E}_{n}[\{\pi(L_i;\gamma_n)-\pi(L_i;\hat{\gamma})\}^2]=O_{P_n}(s_\gamma\log(p \lor n)/n).$ \label{gamma_p_error}
\item $\mathbb{E}_{n}[\{m(L_i;\beta_n)-m(L_i;\hat{\beta})\}^2]=O_{P_n}(s_\beta\log(p \lor n)/n).$ \label{beta_p_error}
\end{enumerate}
\end{assumption}
\begin{remark}
\normalfont 
Results \ref{p_error}\ref{gamma_p_error}-\ref{p_error}\ref{beta_p_error} follow from the results of \citet{belloni_least_2013}, \citet{belloni_inference_2014}, \citet{farrell_robust_2015} and \citet{belloni_post-selection_2016} on Lasso and Post-Lasso-based estimators. Rates on quantities like $\mathbb{E}_{n}[\{\gamma^T_nL_i-\hat{\gamma}^TL_i\}^2]$ also follow from those papers.
\end{remark}
\begin{proof}
The proof will proceed in four steps. In the first step, we show that 
\begin{align}\label{Res_1}
\frac{1}{\sqrt{n}}\sum^n_{i=1}U_i(\hat{\eta})=\frac{1}{\sqrt{n}}\sum^n_{i=1}U_i(\eta_n)+o_{P_n}(1)
\end{align}
in the second, that
\begin{align}\label{Res_2}
\frac{\mathbb{E}_{n}\{U_i(\eta_n)\}}{\sqrt{\frac{1}{n}\mathbb{E}_{P_n}\{U_i(\eta_n)^2\}}}\overset{d}{\to}\mathcal{N}(0,1)
\end{align}
in the third, that
\begin{align}\label{Res_3}
%\mathbb{E}_{n}\{U_i(\hat{\eta})^2\}=V_n+o_{P_n}(1)
\mathbb{E}_{n}[U_i(\hat{\eta})^2-\mathbb{E}_{n}\{U_i(\hat{\eta})\}^2]^{-1}=\mathbb{E}_{P_n}\{U_i(\eta_n)^2\}^{-1}+o_{P_n}(1)
\end{align}
Finally, we will use these results to show result (\ref{Main res1}) in the main paper. 

\subsubsection*{Step 1.}\quad

Consider the sample mean of  $U_i(\hat{\eta})$:
\begin{align*}
\mathbb{E}_n\{U_i(\hat{\eta})\}=\mathbb{E}_n\{U_i(\eta_n)+U_i(\hat{\eta})-U_i(\eta_n)\}
\end{align*}
After some algebra, we have
\begin{align*}
\sqrt{n}\mathbb{E}_n\{U_i(\hat{\eta})\}=\sqrt{n}\mathbb{E}_n\{U_i(\eta_n)\}+R_1+R_2+R_3
\end{align*}
where 
\begin{align*}
&R_1=\frac{1}{\sqrt{n}}\sum^n_{i=1}\{A_i-\pi(L_i;\gamma_n)\}\{m(L_i;\beta_n)-m(L_i;\hat{\beta})\},\\
&R_2=\frac{1}{\sqrt{n}}\sum^n_{i=1}\{Y_i-m(L_i;\beta_n)\}\{\pi(L_i;\gamma_n)-\pi(L_i;\hat{\gamma})\}\\
&R_3=\frac{1}{\sqrt{n}}\sum^n_{i=1}\{m(L_i;\hat{\beta})-m(L_i;\beta_n)\}\{\pi(L_i;\hat{\gamma})-\pi(L_i;\gamma_n)\}
\end{align*}
We aim to show that $R_1$, $R_2$ and $R_3$ are all $o_{P_n}(1)$ under model $\mathcal{A}\cap\mathcal{B}$.

For $R_1$, under the null as defined in (\ref{cond_ind}),
\begin{align*}
&\mathbb{E}_{P_n}\{R_{1}|(Y_i,L_i)^n_{i=1}\}\\&=\frac{1}{\sqrt{n}}\sum^n_{i=1}[\mathbb{E}_{P_n}\{A_i|(Y_i,L_i)^n_{i=1}\}-\pi(L_i;\gamma_n)]\{m(L_i;\beta_n)-m(L_i;\hat{\beta})\}\\
%&=\frac{1}{\sqrt{n}}\sum^n_{i=1}\{\mathbb{E}_{P_n}[A_i|(L_i)^n_{i=1}]-\kappa\}\{m(L_i;\beta_n)-m(L_i;\hat{\beta})\}\\
&=0
\end{align*}
and 
\begin{align*}
&\mathbb{E}_{P_n}\{R^{2}_{1}|(Y_i,L_i)^n_{i=1}\}\\&=\mathbb{E}_{n}\left(\mathbb{E}_{P_n}[\{A_i-\pi(L_i;\gamma_n)\}^2|(Y_i,L_i)^n_{i=1}]\{m(L_i;\beta_n)-m(L_i;\hat{\beta})\}^2\right)
\\&\leq C\mathbb{E}_{n}[\{m(L_i;\beta_n)-m(L_i;\hat{\beta})\}^2]
\end{align*}
where $C$ is a constant. Furthermore, invoking Assumption \ref{p_error}\ref{beta_p_error} and sparsity condition \ref{S_sum}, we have 
\[C\mathbb{E}_{n}[\{m(L_i;\beta_n)-m(L_i;\hat{\beta})\}^2]=o_{P_n}(1)\]
and  $\mathbb{E}_{P_n}[R^2_1]=o(1)$. 
Hence one can then apply Chebyshev's inequality to show that $|R_1|=o_{P_n}(1)$.

Similarly, for $R_2$,
\begin{align*}
&\mathbb{E}_{P_n}[R^2_2|(A_i,L_i)^n_{i=1}]\\&=\mathbb{E}_{n}\bigg[\mathbb{E}_{P_n}[\{Y_i-m(L_i;\beta_n)\}^2|(A_i,L_i)^n_{i=1}]\{\pi(L_i;\gamma_n)-\pi(L_i;\hat{\gamma})\}^2\bigg]\\&\leq C\mathbb{E}_{n}[\{\pi(L_i;\gamma_n)-\pi(L_i;\hat{\gamma})\}^2],
\end{align*}
where $C$ is a constant. This inequality follows from Assumption \ref{moment_drt}\ref{m1_drt}. Invoking Assumption \ref{p_error}\ref{gamma_p_error} and sparsity condition \ref{S_sum}, we have 
\[C\mathbb{E}_{n}[\{\pi(L_i;\gamma_n)-\pi(L_i;\hat{\gamma})\}^2]=o_{P_n}(1)\]
so  $\mathbb{E}_{P_n}[R^2_2]=o(1)$ and using Chebyshev's inequality, $|R_2|=o_{P_n}(1)$. 

Finally, considering $R_3$, by H\"older's inequality
\begin{align*}
&\bigg|\frac{1}{\sqrt{n}}\sum^n_{i=1}\{m(L_i;\hat{\beta})-m(L_i;\beta_n)\}\{\pi(L_i;\hat{\gamma})-\pi(L_i;\gamma_n)\}\bigg|\\
&\leq \sqrt{n}\mathbb{E}_n[\{m(L_i;\hat{\beta})-m(L_i;\beta_n)\}^2]^{1/2}\mathbb{E}_n[\{\pi(L_i;\hat{\gamma})-\pi(L_i;\gamma_n)\}^2]^{1/2}
\end{align*}
Then given the joint sparsity condition \ref{S_prod} on $s_\gamma$ and $s_\beta$, and Assumptions \ref{p_error}\ref{gamma_p_error} and \ref{p_error}\ref{beta_p_error}, it follows that
\[\sqrt{n}\mathbb{E}_n[\{m(L_i;\hat{\beta})-m(L_i;\beta_n)\}^2]^{1/2}\mathbb{E}_n[\{\pi(L_i;\hat{\gamma})-\pi(L_i;\gamma_n)\}^2]^{1/2}=o_{P_n}(1)\]
Therefore $|R_3|=o_{P_n}(1)$ and we have result (\ref{Res_1}).

\subsubsection*{Step 2.}\quad

Under the null, we have that 
\begin{align*}
\var_{P_n}\{U_i(\eta_n)\}&=\mathbb{E}_{P_n}\{U_i(\eta_n)^2\}\\&=\mathbb{E}_{P_n}[\{A_i-\pi(L_i;\gamma_n)\}^2\{Y_i-m(L_i;\beta_n)\}^2]
\end{align*}
and by Assumptions \ref{moment_drt}\ref{m2_drt} and \ref{moment_drt}\ref{m4_drt}, $\mathbb{E}_{P_n}\{U_i(\eta_n)^2\}$ is bounded away from zero (necessary for the inversion) and above uniformly in $n$. 

For some $\epsilon>0$, such that $4+2\epsilon\leq r$
\begin{align*}
&\mathbb{E}_{P_n}\{|U_i(\eta_n)|^{2+\epsilon}\}\\
&\leq \mathbb{E}_{P_n}\{|A_i-\pi(L_i;\gamma_n)|^{4+2\epsilon}\}^{1/2}\mathbb{E}_{P_n}\{|Y_i-m(L_i;\beta_n)|^{4+2\epsilon}\}^{1/2}\\
&\leq C
\end{align*}
where $C$ is a constant, by Assumption \ref{moment_drt}\ref {m2_drt}. This verifies the Lyapunov condition, such that using this result (and the fact that $\mathbb{E}_{P_n}\{U_i(\eta_n)^2\}$ is finite) one can then invoke the Lyapunov central limit theorem for triangular arrays to get result (\ref{Res_2}). We rely on array asymptotics here in order to allow for the data-generating process to change with $n$.

%THIS APPEARS TO BE THE KEP STEP WHERE WE MAKE USE OF THE LACK OF CORRELATION BETWEEN THE ERRORS...? 
%Or not...maybe we are just loosing the upper bound, which is why we have an inequality.

\subsubsection*{Step 3.}\quad

%Because the same (or similar) arguments in this step are required in each proof, we will work under the sparsity conditions \ref{S_beta} and \ref{S_gamma} that are implied by the stronger ultra-sparsity conditions \ref{US_gamma} and \ref{US_beta} assumed to hold in this section.

Since $\mathbb{E}_{P_n}\{U_i(\eta_n)^2\}$ is bounded away from zero uniformly in $n$ and $\mathbb{E}_{P_n}\{U_i(\eta_n)\}=0$, given the previous steps it suffices to show that $\mathbb{E}_{n}\{U_i(\hat{\eta})^2\}=\mathbb{E}_{P_n}\{U_i(\eta_n)^2\}+o_{P_n}(1)$. We will first obtain the result
\begin{align}\label{var_res_init}
\mathbb{E}_{n}\{U_i(\eta_n)^2\}=\mathbb{E}_{P_n}\{U_i(\eta_n)^2\}+o_{P_n}(1)
\end{align}

We have
\begin{align*}
&\mathbb{P}_{P_n}\left[|\mathbb{E}_n\{U_i(\eta_n)^2\}-\mathbb{E}_{P_n}\{U_i(\eta_n)^2\}|^2>\epsilon\right]\\
&\leq \frac{1}{\epsilon^2}\mathbb{E}_{P_n}\bigg[\bigg|\frac{1}{n}\sum^n_{i=1}U_i(\eta_n)^2-\mathbb{E}_{P_n}\{U_i(\eta_n)^2\}\bigg|^2\bigg]\\
&\leq \frac{1}{\epsilon^2n^2}\bigg(2-\frac{1}{n}\bigg)\sum^n_{i=1}\mathbb{E}_{P_n}\bigg[\big|U_i(\eta_n)^2-\mathbb{E}_{P_n}\{U_i(\eta_n)^2\}\big|^2\bigg]
\end{align*}
where we first apply Chebyshev's inequality. The second uses the Von Bahr-Esseen inequality: let $q\in[1,2]$, then for independent mean-zero variables $X_1,...,X_n$, we have
\[E\bigg(\bigg|\sum_{i=1}^nX_i\bigg|^q\bigg)\leq\bigg(2-\frac{1}{n}\bigg)\sum_{i=1}^nE(|X_i|^q)\]
\citep{von_bahr_inequalities_1965}.

Since
\begin{align*}
&\mathbb{E}_{P_n}\left([U_i(\eta_n)^2-\mathbb{E}_{P_n}\{U_i(\eta_n)^2\}]^2\right)\\
&=\var_{P_n}\{U_i(\eta_n)^2\}\\
&=\var_{P_n}\big(\{A_i-\pi(L_i;\gamma_n)\}^2\mathbb{E}_{P_n}[\{Y_i-m(L_i;\beta_n)\}^2|A_i,L_i]\big)\\
&\quad+\mathbb{E}_{P_n}\big(\{A_i-\pi(L_i;\gamma_n)\}^4\var_{P_n}[\{Y_i-m(L_i;\beta_n)\}^2|A_i,L_i]\big)
\end{align*}
then firstly
\begin{align*}
&\var_{P_n}\big(\{A_i-\pi(L_i;\gamma_n)\}^2\mathbb{E}_{P_n}[\{Y_i-m(L_i;\beta_n)\}^2|A_i,L_i]\big)\\
%&\leq C \var_{P_n}[\{A_i-\pi(L_i;\gamma_n)\}^2]\\
&\leq C \mathbb{E}_{P_n}[\{A_i-\pi(L_i;\gamma_n)\}^2]^2=O(1)
\end{align*}
where $C$ is a constant, using Assumption \ref{moment_drt}\ref{m1_drt}. Secondly,
\begin{align*}
&\mathbb{E}_{P_n}\big(\{A_i-\pi(L_i;\gamma_n)\}^4\var_{P_n}[\{Y_i-m(L_i;\beta_n)\}^2|A_i,L_i]\big)\\
&\leq C\mathbb{E}_{P_n}[\{A_i-\pi(L_i;\gamma_n)\}^4]=O(1)
\end{align*}
where $C$ is again a constant, invoking Assumptions \ref{moment_drt}\ref{m1_drt} and \ref{moment_drt}\ref {m2_drt}. Result (\ref{var_res_init}) then follows. 

It remains to show that 
\begin{align}\label{temp_res}
\mathbb{E}_{n}\{U_i(\hat{\eta})^2\}=\mathbb{E}_{n}\{U_i(\eta_n)^2\}+o_{P_n}(1)
\end{align}
By adding and subtracting $\mathbb{E}_{n}[\{A_i-\pi(L_i;\hat{\gamma})\}^2\{Y_i-m(L_i;\beta_n)\}^2]$ and applying the triangle inequality, then
\begin{align*}
&|\mathbb{E}_{n}[\{A_i-\pi(L_i;\hat{\gamma})\}^2\{Y_i-m(L_i;\hat{\beta})\}^2-\{A_i-\pi(L_i;\gamma_n)\}^2\{Y_i-m(L_i;\beta_n)\}^2]|\\
&\leq\big|\mathbb{E}_{n}\big([\{A_i-\pi(L_i;\hat{\gamma})\}^2-\{A_i-\pi(L_i;\gamma_n)\}^2]\{Y_i-m(L_i;\beta_n)\}^2\big)\big|\\
&\quad+\big|\mathbb{E}_{n}\big([\{Y_i-m(L_i;\hat{\beta})\}^2-\{Y_i-m(L_i;\beta_n)\}^2]\{A_i-\pi(L_i;\hat{\gamma})\}^2\big)\big|\\
&=|R_4|+|R_5|
\end{align*}
Looking first at $|R_5|$, after some algebra we have
\begin{align*}
&\big|\mathbb{E}_{n}\big([\{Y_i-m(L_i;\hat{\beta})\}^2-\{Y_i-m(L_i;\beta_n)\}^2]\{A_i-\pi(L_i;\hat{\gamma})\}^2\big)\big|\nonumber\\
&\leq |\mathbb{E}_{n}[\{m(L_i;\hat{\beta})-m(L_i;\beta_n)\}^2\{A_i-\pi(L_i;\hat{\gamma})\}^2 ]|\\%\label{mod_1}\\
&\quad+|2\mathbb{E}_{n}[\{Y_i-m(L_i;\beta_n)\}\{m(L_i;\hat{\beta})-m(L_i;\beta_n)\}\{A_i-\pi(L_i;\hat{\gamma})\}^2]|\\%\label{mod_2}
&=|R_{5a}|+|R_{5b}|.
\end{align*}
Then,
\begin{align*}
|R_{5a}|\leq \max_{i\leq n}\{A_i-\pi(L_i;\hat{\gamma})\}^2\mathbb{E}_{n}[\{m(L_i;\hat{\beta})-m(L_i;\beta_n)\}^2]=o_{P_n}(1)
\end{align*}
following Assumption \ref{p_error}\ref{beta_p_error}, the fact that $A$ is binary and sparsity condition \ref{S_prod}. Furthermore,
\begin{align*}
|R_{5b}|&\leq 2 \max_{i\leq n}\{A_i-\pi(L_i;\hat{\gamma})\}^2\mathbb{E}_{n}[\{Y_i-m(L_i;\beta_n)\}^2]^{1/2}\\&\quad\times\mathbb{E}_{n}[\{m(L_i;\hat{\beta})-m(L_i;\beta_n)\}^2]^{1/2}
\end{align*}
As before, one can bound $\max_{i\leq n}\{A_i-\pi(L_i;\hat{\gamma})\}^2$, and $\mathbb{E}_{n}[\{m(L_i;\hat{\beta})-m(L_i;\beta_n)\}^2]^{1/2}=o_{P_n}(1)$ by Assumption \ref{p_error}\ref{beta_p_error}. For $\mathbb{E}_{n}[\{Y_i-m(L_i;\beta_n)\}^2]^{1/2}$, note that by Assumption \ref{moment_drt}\ref {m2_drt}, $\mathbb{E}_{P_n}\{|Y_i-m(L_i;\beta_n)|^4\}=O(1)$ and hence $\mathbb{E}_{P_n}[\{Y_i-m(L_i;\beta_n)\}^2]^{1/2}=O(1)$ since the bound on the higher-order moment implies the existence of the lower-order moment. To bound the sample average $\mathbb{E}_{n}[\{Y_i-m(L_i;\beta_n)\}^2]^{1/2}$, by the Von Bahr-Esseen inequality:
\begin{align*}
&\mathbb{P}_{P_n}\bigg(\big|\mathbb{E}_n\{|Y_i-m(L_i;\beta_n)|^4\}-\mathbb{E}_{P_n}\{|Y_i-m(L_i;\beta_n)|^4\}\big|^q>\epsilon\bigg)\\
&\leq \frac{1}{\epsilon^qn^q}\bigg(2-\frac{1}{n}\bigg)\sum^n_{i=1}\mathbb{E}_{P_n}\bigg[\big||Y_i-m(L_i;\beta_n)|^4-\mathbb{E}_{P_n}\{|Y_i-m(L_i;\beta_n)|^4\}\big|^q\bigg]
\end{align*}
for $q\in[1,2]$. Applying Minkowski's inequality and using Assumption \ref{moment_drt}\ref {m2_drt}:
\begin{align*}
&\mathbb{E}_{P_n}\bigg[\big||Y_i-m(L_i;\beta_n)|^4-\mathbb{E}_{P_n}\{|Y_i-m(L_i;\beta_n)|^4\}\big|^q\bigg]\\
&\leq \bigg(\mathbb{E}_{P_n}\left\{|Y_i-m(L_i;\beta_n)|^{4q}\right\}^{1/q}+\mathbb{E}_{P_n}\big[\mathbb{E}_{P_n}\{|Y_i-m(L_i;\beta_n)|^{4}\}^q\big]^{1/q}\bigg)^q\\
&=O(1),
\end{align*}
hence $\mathbb{E}_{n}\{|Y_i-m(L_i;\beta_n)|^4\}=O_{P_n}(1)$ and also 
$\mathbb{E}_{n}[\{Y_i-m(L_i;\beta_n)\}^2]^{1/2}=O_{P_n}(1)$. Therefore $|R_5|=o_{P_n}(1)$.

Similarly, for $R_4$ we have
\begin{align*}
|R_4| \leq &\mathbb{E}_{n}[\{\pi(L_i;\hat{\gamma})-\pi(L_i;\gamma_n)\}^2\{Y_i-m(L_i;\beta_n)\}^2]\\
&+2\max_{i\leq n}|A_i-\pi(L_i;\gamma_n)|\mathbb{E}_{n}\{|Y_i-m(L_i;\beta_n)|^4\}^{1/2}\\&\quad\times\mathbb{E}_{n}[\{\pi(L_i;\hat{\gamma})-\pi(L_i;\gamma_n)\}^2]^{1/2}
\end{align*}
By invoking Assumptions \ref{moment_drt}\ref {m2_drt}, \ref{p_error}\ref{gamma_p_error} and the sparsity condition \ref{S_sum}, one can show that the second term on the right hand side of the inequality is $o_{P_n}(1)$. Regarding the first term, 
\begin{align*}
&\mathbb{E}_{n}[\{\pi(L_i;\hat{\gamma})-\pi(L_i;\gamma_n)\}^2\{Y_i-m(L_i;\beta_n)\}^2]\\
&\leq \max_{i\leq n}|\pi(L_i;\hat{\gamma})-\pi(L_i;\gamma_n)|\mathbb{E}_{n}[\{\pi(L_i;\hat{\gamma})-\pi(L_i;\gamma_n)\}^2]^{1/2}\\&\quad\times\mathbb{E}_{n}\{|Y_i-m(L_i;\beta_n)|^4\}^{1/2}\\
&=o_{P_n}(1)
\end{align*}
using H\"older's inequality, Assumptions \ref{moment_drt}\ref {m2_drt}, \ref{p_error}\ref{gamma_p_error}, the previous result that $\mathbb{E}_{n}\{|Y_i-m(L_i;\beta_n)|^4\}=O_{P_n}(1)$ and the sparsity condition \ref{S_sum}. We have shown (\ref{temp_res}) and result (\ref{Res_3}) follows.

\subsubsection*{Step 4.}\quad

Consider a sequence $P_n \in P$ such that for any $t\in\mathbb{R}$
\[\lim_{n\to \infty}|\mathbb{P}_{P_n}(T_n\leq t)-\Phi(t)|> 0.\]
This directly contradicts the results given above that the test statistic $T_n$ converges to a normal distribution with mean 0 and variance 1 under any subsequence $P_n$ in $P$.
\end{proof}

%Several of the the assumptions described below are primitive conditions required to obtain consistent estimators; %Additional regularity assumptions (such as conditions on restricted/sparse eigenvalues) are required to guarantee the uniform consistency of the $\ell_1$-penalized estimators and the predictions (see below). 
%we refer the reader to comprehensive conditions for  consistency of $\ell_1$-penalized estimators given in \citet{buhlmann_statistics_2011} (see also \citet{belloni_sparse_2012} and \citet{belloni_post-selection_2016} in the context of uniformly valid inference). The following conditions are typically placed on the order of the penalty level:
%\begin{align}
%\lambda_{\gamma}&=O\bigg(\sqrt{\frac{\log (p\lor n)}{n}}\bigg)\label{penalty_assumption_gamma}\\ \lambda_{\beta}&=O\bigg(\sqrt{\frac{\log (p\lor n)}{n}}\bigg)\label{penalty_assumption_beta}. 
%\end {align}
%Further, the concent

%\begin{assumption}\label{concentration_drt}
%(Concentration bound). 
%\begin{align}
%\norm{\mathbb{E}_n[\{A_i-\pi(L_i;\gamma_n)\}L_i]}_\infty=O_{P_n}\left(\sqrt{\frac{\log (p\lor n)}{n}}\right)
%\end{align}
%\end{assumption}
%\begin{remark}
%\normalfont This is required to prove Theorem \ref{theorem4_drt}. One can verify Assumption \ref{concentration_drt} using the moderate deviations theory of self-normalised sums \citep{de_la_pena_self-normalized_2009,belloni_sparse_2012}.
%\end{remark}

%\begin{assumption}\label{res_eig}
%(Restricted eigenvalues). 
%\begin{align}
%\end{align}
%\end{assumption}

\subsection{Proof of Theorem \ref{theorem2_drt}}

In the proofs of Theorems \ref{theorem2_drt}  and Theorem \ref{theorem4_drt}, we will rely on some additional assumptions. Specifically, we will make use of the following rates:
\begin{assumption}\label{l_error_w}
(Rates on error of estimated coefficients) Recall that $s^*=s_\gamma \lor s_\beta$; then
\begin{enumerate}[label=(\roman*)]
\item $||\gamma_n-\hat{\gamma}_{BR}||_1=O_{P_n}(s^*\sqrt{\log(p \lor n)/n})$. \label{gamma_l1_error}
\item $||\beta_n-\hat{\beta}_{BR}||_1=O_{P_n}(s^*\sqrt{\log(p \lor n)/n})$. \label{beta_l1_error}
\item $||\gamma_n-\hat{\gamma}_{BR}||_2=O_{P_n}(\sqrt{s^*\log(p \lor n)/n})$. \label{gamma_l2_error}
\item $||\beta_n-\hat{\beta}_{BR}||_2=O_{P_n}(\sqrt{s^*\log(p \lor n)/n})$. \label{beta_l2_error}
\end{enumerate}
\end{assumption}
\begin{assumption}\label{p_error_w}
(Rates on prediction error for weighted estimators) 
\begin{enumerate}[label=(\roman*)]
\item $\mathbb{E}_{n}[\{\pi(L_i;\gamma_n)-\pi(L_i;\hat{\gamma}_{BR})\}^2]=O_{P_n}(s^*\log(p \lor n)/n)$. \label{gamma_br_p_error}
\item $\mathbb{E}_{n}[\{m(L_i;\beta_n)-m(L_i;\hat{\beta}_{BR})\}^2]=O_{P_n}(s^*\log(p \lor n)/n).$ \label{beta_br_p_error}
\end{enumerate}
\end{assumption}
%If both working models are linear, or the exposure model is logistic, then 
\begin{remark}%\ref{l_error_w}
\normalfont 
Our proposed nuisance parameter estimators are obtained via (weighted) $\ell_1$ penalized regression. The rates in \ref{l_error_w} and \ref{p_error_w} again follow from the results of \citet{belloni_post-selection_2016} on weighted $\ell_1$-penalized regression (e.g. their Theorem 4); see also \citet{ning_general_2017}. % We conjecture their results on weighted Lasso estimates hold more generally e.g. for non-linear outcome models.  
\end{remark}

To obtain these rates in Assumptions \ref{l_error_w} and \ref{p_error_w}, we need assumptions on the order of the penalty (which we exploit in the proof); specifically, we need that 
\begin{align}
\lambda_{\gamma}&=O\bigg(\sqrt{\frac{\log (p\lor n)}{n}}\bigg)\label{penalty_assumption_gamma}\\ \lambda_{\beta}&=O\bigg(\sqrt{\frac{\log (p\lor n)}{n}}\bigg)\label{penalty_assumption_beta}. 
\end {align}
These are standard assumption on the order of the penalty level in the literature, when working either under the intersection submodel \citep{farrell_robust_2015,belloni_post-selection_2016}, or the union model \citep{avagyan_honest_2017}. %For a comprehensive discussion of suitable regularity conditions for proving the consistency of $\ell_1$-penalized estimators, we refer the interested reader to \citet{buhlmann_statistics_2011}. %Intuitively, the penalty must dominate the noise in the data. 

\begin{proof}
Repeating the previous decomposition of \\$\sqrt{n}\mathbb{E}_n\{U_i(\hat{\eta}_{BR})\}-\sqrt{n}\mathbb{E}_n\{U_i(\eta_n)\}$, for $R_1$, we now have 
\begin{align}
&\frac{1}{\sqrt{n}}\sum^n_{i=1}\{A_i-\pi(L_i;\gamma_n)\}\{m(L_i;\beta_n)-m(L_i;\hat{\beta}_{BR})\}\nonumber\\
&=\frac{1}{\sqrt{n}}\sum^n_{i=1}\{A_i-\pi(L_i;\hat{\gamma}_{BR})\}\{m(L_i;\beta_n)-m(L_i;\hat{\beta}_{BR})\}\label{R11}\\
&\quad+\frac{1}{\sqrt{n}}\sum^n_{i=1}\{\pi(L_i;\hat{\gamma}_{BR})-\pi(L_i;\gamma_n)\}\{m(L_i;\beta_n)-m(L_i;\hat{\beta}_{BR})\}\label{R12}
\end{align}
Then for (\ref{R11}), note that following a Taylor expansion, 
\begin{align*}
&\frac{1}{\sqrt{n}}\sum^n_{i=1}\{A_i-\pi(L_i;\hat{\gamma}_{BR})\}\{m(L_i;\beta_n)-m(L_i;\hat{\beta}_{BR})\}
\\&=-\sqrt{n}\mathbb{E}_n\bigg\{\frac{\partial U_i(\hat{\eta}_{BR})}{\partial \beta}\bigg\} (\beta_n-\hat{\beta}_{BR})+O_{P_n}(\sqrt{n}||\beta_n-\hat{\beta}_{BR}||^2_2)
\end{align*}
and by H\"older's inequality,
\begin{align*}
&\bigg|\sqrt{n}\mathbb{E}_n\bigg\{\frac{\partial U_i(\hat{\eta}_{BR})}{\partial \beta}\bigg\} (\beta_n-\hat{\beta}_{BR})\bigg|\\
%&\leq \sqrt{n}\norm{\mathbb{E}_n\bigg\{\frac{\partial U_i(\hat{\eta}_{BR})}{\partial \beta}\bigg\}}_{\infty}||\beta_n-\hat{\beta}_{BR}||_1\\
%&=\sqrt{n}\norm{\lambda_\gamma\delta|\hat{\gamma}_{BR}|^{\delta-1}\circ\sign(\hat{\gamma}_{BR})}_{\infty}||\beta_n-\hat{\beta}_{BR}||_1\\
&\leq\sqrt{n}\lambda_\gamma||\beta_n-\hat{\beta}_{BR}||_1
\end{align*}
%since $\norm{\delta|\hat{\gamma}_{BR}|^{\delta-1}\sign(\hat{\gamma}_{BR})}_\infty\leq \delta$.
using the stationarity conditions for the $\ell_1$-penalised estimator of $\gamma_0$.
Therefore, given (\ref{penalty_assumption_gamma}), Assumptions \ref{l_error_w}\ref{beta_l1_error}, \ref{l_error_w}\ref{beta_l2_error} and sparsity condition \ref{US_sum},
\[\bigg|\frac{1}{\sqrt{n}}\sum^n_{i=1}\{A_i-\pi(L_i;\hat{\gamma}_{BR})\}\{m(L_i;\beta_n)-m(L_i;\hat{\beta}_{BR})\}\bigg|=o_{P_n}(1)\]
Considering the other term (\ref{R12}), along the same lines as in the proof of Theorem \ref{theorem1_drt},
one can show that 
\begin{align*}
&\bigg|\frac{1}{\sqrt{n}}\sum^n_{i=1}\{\pi(L_i;\hat{\gamma}_{BR})-\pi(L_i;\gamma_n)\}\{m(L_i;\beta_n)-m(L_i;\hat{\beta}_{BR})\}\bigg|=o_{P_n}(1)
\end{align*}
using H\"older's inequality, sparsity condition \ref{US_sum} and Assumptions \ref{p_error_w}\ref{gamma_br_p_error} and \ref{p_error_w}\ref{beta_br_p_error}. Therefore $|R_1|=o_{P_n}(1)$. One can re-apply the argument given immediately above to show that $|R_3|=o_{P_n}(1)$.

By noting that
\begin{align*}
&\frac{1}{\sqrt{n}}\sum^n_{i=1}\{Y_i-m(L_i;\hat{\beta}_{BR})\}\{\pi(L_i;\gamma_n)-\pi(L_i;\hat{\gamma}_{BR})\}
\\&=-\sqrt{n}\mathbb{E}_n\bigg\{\frac{\partial U_i(\hat{\eta}_{BR})}{\partial \gamma}\bigg\} (\gamma_n-\hat{\gamma}_{BR}) +O_{P_n}(\sqrt{n}||\gamma_n-\hat{\gamma}_{BR}||^2_2)
\end{align*}
and
\begin{align*}
&\bigg|\sqrt{n}\mathbb{E}_n\bigg\{\frac{\partial U_i(\hat{\eta}_{BR})}{\partial \gamma}\bigg\}  (\gamma_n-\hat{\gamma}_{BR})\bigg|\\
%&\leq \sqrt{n}\norm{\mathbb{E}_n\bigg\{\frac{\partial U_i(\hat{\eta}_{BR})}{\partial \gamma}\bigg\} }_{\infty}||\gamma_n-\hat{\gamma}_{BR}||_1\\
&\leq\sqrt{n}\lambda_\beta||\gamma_n-\hat{\gamma}_{BR}||_1
\end{align*}
one can also repeat the above arguments to show that $|R_2|=o_{P_n}(1)$, given (\ref{penalty_assumption_beta}), Assumptions  \ref{l_error_w}\ref{beta_l1_error}, \ref{l_error_w}\ref{beta_l2_error}, \ref{p_error_w}\ref{gamma_br_p_error}, \ref{p_error_w}\ref{beta_br_p_error} and sparsity condition \ref{US_sum}. Result (\ref{Res_1}) follows immediately and the main result follows by essentially repeating Steps 2-4 from the proof of Theorem \ref{theorem1_drt}.
\end{proof}

\subsection{Proof of Corollary \ref{theorem3_drt}}

\begin{proof}
As discussed in the main paper, when $E(A|L)=\pi(L;\gamma^*)$ and $\gamma^*$ is known, the proposal in Section \ref{nuis_mis} for estimating $\beta$ reduces to standard (unweighted) PMLE.  Then repeating Step 1 of the previous proof,
\begin{align*}
&\frac{1}{\sqrt{n}}\sum^n_{i=1}\{A_i-\pi(L_i;\gamma^*)\}\{Y_i-m(L_i;\hat{\beta})\}\\&=\frac{1}{\sqrt{n}}\sum^n_{i=1}\{A_i-\pi(L_i;\gamma^*)\}\{Y_i-m(L_i;\beta_n)\}\\
&\quad+\frac{1}{\sqrt{n}}\sum^n_{i=1}\{A_i-\pi(L_i;\gamma^*)\}\{m(L_i;\beta_n)-m(L_i;\hat{\beta})\}
\end{align*}
Let us define $R_1^*=\sqrt{n}\mathbb{E}_n[\{A_i-\pi(L_i;\gamma^*)\}\{m(L_i;\beta_n)-m(L_i;\hat{\beta})\}]$.

Then,
\begin{align*}
&\mathbb{E}_{P_n}\{R^{*2}_{1}|(Y_i,L_i)^n_{i=1}\}\\&=\mathbb{E}_{n}\bigg(\mathbb{E}_{P_n}[\{A_i-\pi(L_i;\gamma^*)\}^2|(L_i)^n_{i=1}]\{m(L_i;\beta_n)-m(L_i;\hat{\beta})\}^2\bigg)
\\&\leq C\mathbb{E}_{n}[\{m(L_i;\beta_n)-m(L_i;\hat{\beta})\}^2]
\end{align*}
where $C$ is a constant. Invoking \ref{p_error_w}\ref{beta_p_error} and sparsity condition \ref{S_sum}, we have 
\[C\mathbb{E}_{n}[\{m(L_i;\beta_n)-m(L_i;\hat{\beta})\}^2]=o_{P_n}(1),\]
hence $\mathbb{E}_{P_n}[R^{*2}_1]=o(1)$ and $|R^*_1|=o_{P_n}(1)$ using Chebyshev's Inequality. Note that sparsity condition \ref{US_sum} has not been invoked.  %and the main result follows by essentially repeating Steps 2-4 from the proof of Theorem \ref{theorem1_drt}. %THIS NEEDS TO BE MADE MORE FORMAL, AND ARE WE IGNORING ISSUES OF BIAS...?
\end{proof}

\subsection{Auxiliary results on weighted estimators}\label{lemma_app}

Here, we restrict to settings where $m(L;\beta_n)=\beta^T_nL$ and $\pi(L;\gamma_n)=\expit(\gamma^T_nL)$. In this case, note that $\hat{\gamma}_{BR}=\hat{\gamma}$, since no weights are used in estimating this parameter. %Also, let us define the weights $w(L;\gamma_n)=\pi(L;\gamma_n)\{1-\pi(L;\gamma_n)\}$, and likewise for $w(L;\hat{\gamma})$. 
In order to make transparent the dependence of the estimator $\hat{\beta}_{BR}$ on the weights, we introduce the notation  $\hat{\beta}(\hat{\gamma})$ for when $\gamma$ (required for the weights) is estimated from the data and $\hat{\beta}(\gamma_n)$ otherwise. %In certain places, for convenience we will use the notation $\hat{\delta}=\hat{\beta}(\hat{\gamma})-\beta_n$, $\tilde{\delta}=\hat{\beta}(\gamma_n)-\beta_n$, $M=\mathbb{E}_n\{L_iw(L_i;\gamma_n)L_i^T\}$ and $\epsilon=Y-m(L;\beta_n)$.% and $M=\mathbb{E}_n\{L_iw(L_i;\gamma_n)L_i^T\}$.

In what follows, we will give a lemma  (key to proving Theorem \ref{theorem4_drt}) regarding the quantity $\hat{\beta}(\gamma_n)-\hat{\beta}(\hat{\gamma})$. This will be helpful for understanding the impact of using estimated weights on the distribution of the test statistic. Several additional assumptions are required:
\begin{assumption}\label{strong_cons}
(Fast convergence of estimating equations)
\begin{align}
&\norm{\mathbb{E}_n[w(L_i;\gamma_n)[Y_i-m\{L_i;\hat{\beta}(\gamma_n)\}]L_i-w(L_i;\hat{\gamma})[Y_i-m\{L_i;\hat{\beta}(\hat{\gamma})\}]L_i]}_\infty\nonumber\\
&=O_{P_n}\bigg(\frac{\sqrt{\log p}}{n}\bigg)\label{strong_cons_eq}
\end{align}
\end{assumption}
\begin{remark}
\normalfont This assumption requires the difference between the estimating equations for $\hat{\beta}(\hat{\gamma})$ and $\hat{\beta}(\gamma_n)$ to shrink very quickly. In low-dimensional settings, this difference is exactly zero by virtue of the estimation procedure. In the high-dimensional setting, even stronger results than (\ref{strong_cons_eq}) are available for our proposed estimators, if we represent our estimator of $\beta_0$ as the solution to estimating equations with a bridge penalty:
\begin{align*}
&0=\frac{1}{n}\sum^n_{i=1}\frac{\partial}{\partial \gamma}U_i(\hat{\eta}_{BR})+\lambda_\beta\delta|\hat{\beta}_{BR}|^{\delta-1}\circ\sign(\hat{\beta}_{BR}).
\end{align*}
Here, $\circ$ is the Hadamard product operator. As $\delta\to 1+$, the penalty term corresponds to the subgradient of the $\ell_1$ or Lasso norm penalty $||\beta||_1$ with respect to $\beta$. Using this representation of our estimator, we can see that
\begin{align*}
&\mathbb{E}_n[w(L_i;\gamma_n)[Y_i-m\{L_i;\hat{\beta}(\gamma_n)\}]L_i-w(L_i;\hat{\gamma})[Y_i-m\{L_i;\hat{\beta}(\hat{\gamma})\}]L_i]\\&
=\sqrt{n}\lambda_\beta \delta[|\hat{\beta}(\gamma_n)|^{\delta-1}\circ\sign(\hat{\beta}(\gamma_n))-|\hat{\beta}(\hat{\gamma})|^{\delta-1}\circ\sign(\hat{\beta}(\hat{\gamma}))]\\
&=\sqrt{n}\lambda_\beta \delta(\delta-1)|\hat{\beta}(\hat{\gamma})|^{\delta-2}\circ\sign(\hat{\beta}(\hat{\gamma}))\{\hat{\beta}(\gamma_n)-\hat{\beta}(\hat{\gamma})\}\\&\quad+O_{P_n}(\sqrt{n}||\hat{\beta}(\gamma_n)-\hat{\beta}(\hat{\gamma})||^2_2) %\\&
%=\sqrt{n}\lambda_\beta S
\end{align*}
where the final equality follows from a Taylor expansion around $\hat{\beta}(\hat{\gamma})$. For any finite $n$, we can choose $\delta$ to be close enough to 1 such that 
\[\sqrt{n}\lambda_\beta \delta(\delta-1)|\hat{\beta}(\hat{\gamma})|^{\delta-2}\circ\sign(\hat{\beta}(\hat{\gamma}))\{\hat{\beta}(\gamma_n)-\hat{\beta}(\hat{\gamma})\}\]
is negligible, since $\hat{\beta}(\gamma_n)-\hat{\beta}(\hat{\gamma})$ is assumed not to diverge as $\delta\to 1+$. %Hence $\norm{R_6}_2$ can also be made arbitrarily close to zero.
\end{remark}
%\begin{assumption}\label{lipschitz}
%(Properties of weights) For any $t\in \mathbb{R}$ and sequence $t_n$ (for $n=1,2...$) satisfying $t_n-t=o(1)$, $|w(t_n)-w(t)|\leq w(t)|t_n-t|$, where $w(t)=\expit(t)\{1-\expit(t)\}$.
%\end{assumption}
%\begin{remark}
%\normalfont \citet{ning_general_2017} demonstrate that the above holds with weights obtained from a logistic regression model.
%\end{remark}
%\begin{assumption}\label{reg_err}
%(Sub-Gaussian errors)
%We have that $\norm{Y-m(L;\beta_n)}_{SG}\leq C_6$,  where $C_6$ is a positive constant.
%\begin{align*}
%\max_{i\leq n}|Y_i-m(L_i;\beta_n)|\mathbb{E}_n\{(\hat{\gamma}^TL_i-\gamma^T_nL_i)^2\}^{1/2}=O_{P_n}\left(\sqrt{\frac{s_\gamma \log(p\lor n)}{n}}\right)
%\end{align*}
%\end{assumption}
%\begin{remark}
%\normalfont Sub-gaussian conditions are typically invoked in the literature in order to apply concentration inequalities.
%In conjunction with sparsity condition \ref{S_sum}, this condition can be most simply shown to hold if $\max_{i\leq n}|Y_i-m(L_i;\beta_n)|=O_{P_n}(1)$. \citet{belloni_sparse_2012} relax this and allow for $\max_{i\leq n}|Y_i-m(L_i;\beta_n)|=O_{P_n}(n^{1/r})$ for some $r>2$. As noted by \citet{farrell_robust_2015}, larger values of $r$ allow one to relax assumptions on sparsity in exchange for stronger conditions on the distributions of the errors. If $Y_i-m(L_i;\beta_n)$ is normal, then $r$ can be arbitrarily large.
%\end{remark}
\begin{assumption}\label{hlms}
(High-dimensional model selection)\\ Let $\hat{s}_{\beta}$ denote the cardinality $|\support\{\hat{\beta}(\hat{\gamma})\}\cup\support\{\hat{\beta}(\gamma_n)\}|$; then $\hat{s}_{\beta}=O_{P_n}(s^*)$.
\end{assumption}
\begin{remark}
\normalfont This assumption states that the number of non-zero entries common to both $\hat{\beta}(\hat{\gamma})$ and $\hat{\beta}(\gamma_n)$ is of similar order to $s^*$, and is satisfied when $|\support\{\hat{\beta}(\hat{\gamma})\}|=O_{P_n}(s^*)$; note that it does not require perfect model selection \citep{belloni_inference_2014}.
\end{remark}

Before giving Lemma \ref{lemma1}, we will also review several regularity conditions necessary for the consistency of $\ell_1$-penalized estimators. This is done for clarity, since they will be invoked in the following proofs. For example, we will use the concentration bound 
\begin{align}\label{concentration_drt}
\norm{\mathbb{E}_n[\{A_i-\pi(L_i;\gamma_n)\}L_i]}_\infty=O_{P_n}\left(\sqrt{\frac{\log (p\lor n)}{n}}\right);
\end{align}
this can be shown to hold either using the theory of moderate deviations for self-normalised sums \citep{de_la_pena_self-normalized_2009,belloni_sparse_2012}, or using sub-Gaussian primitive conditions. Furthermore, restricted eigenvalues conditions are also required. For the estimator $\hat{\beta}(\hat{\gamma})$ we need that for any constant $\zeta\geq1$, there exists a finite constant $\kappa>0$ such that 
\begin{align}
\min \left[\frac{\hat{\delta}^TM\hat{\delta}}{\norm{\hat{\delta}}^2_2}: \hat{\delta}\in\mathbb{R}^p, \norm{\hat{\delta}_{T^C}}_1\leq \zeta\norm{\hat{\delta}_{T}}_1,\hat{\delta}\neq0  \right] &\geq \kappa > 0 \label{RE_hat} %\\
%\min \left[\frac{\tilde{\delta}^TM\tilde{\delta}}{\norm{\tilde{\delta}}^2_2}: \tilde{\delta}\in\mathbb{R}^p, \norm{\tilde{\delta}_{T^C}}_1\leq \zeta\norm{\tilde{\delta}_{T}}_1,\tilde{\delta}\neq0  \right] &\geq \kappa> 0 \label{RE_tilde}
\end{align}
where $\hat{\delta}=\hat{\beta}(\hat{\gamma})-\beta_n$ and $M=\mathbb{E}_n\{L_iw(L_i;\gamma_n)L_i^T\}$.
% and $M=\mathbb{E}_n\{L_iw(L_i;\gamma_n)L_i^T\}$.. Restricted eigenvalue conditions are key for deriving the nice properties of many sparse estimators \citep{bickel_simultaneous_2009}.
\begin{remark}
\normalfont %If the cone constraints on $\hat{\delta}$ and $\tilde{\delta}$ in (\ref{RE_hat}) and (\ref{RE_tilde}) are satisfied, then by the triangle inequality,
If the cone constraint on $\hat{\delta}$ in (\ref{RE_hat}) is satisfied (and the equivalent condition is satisfied for $\tilde{\delta}=\hat{\beta}(\gamma_n)-\beta_n$), then by the triangle inequality,
\begin{align*}
\norm{\hat{\beta}(\hat{\gamma})_{T^C}-\hat{\beta}(\gamma_n)_{T^C}}_1&= \norm{\hat{\beta}(\hat{\gamma})_{T^C}  -\beta_{nT^C}+\beta_{nT^C}-    \hat{\beta}(\gamma_n)_{T^C}}_1\\
&\leq \norm{\hat{\delta}_{T^C}}_1+\norm{\tilde{\delta}_{T^C}}_1\\
&\leq \zeta \left(\norm{\hat{\delta}_{T}}_1+\norm{\tilde{\delta}_{T}}_1\right).
\end{align*}
Defining the set 
\[\Delta_{\zeta,T}=\left\{\hat{\beta}(\hat{\gamma})-\hat{\beta}(\gamma_n)\in\mathbb{R}^p \backslash \{0\}: \norm{\hat{\beta}(\hat{\gamma})_{T^C}-\hat{\beta}(\gamma_n)_{T^C}}_1\leq \zeta \left(\norm{\hat{\delta}_{T}}_1+\norm{\tilde{\delta}_{T}}_1\right)\right\}\]
and the minimum restricted eigenvalue of $M$ as
\begin{align*}%\label{restricted_eigenvalue}
\varphi^2_\zeta(M)=\min_{\hat{\beta}(\hat{\gamma})-\hat{\beta}(\gamma_n) \in\Delta_{\zeta,T},|T|\leq s^*} \left[ \frac{\{\hat{\beta}(\hat{\gamma})-\hat{\beta}(\gamma_n)\}^TM\{\hat{\beta}(\hat{\gamma})-\hat{\beta}(\gamma_n)\}}{\norm{\hat{\beta}(\hat{\gamma})-\hat{\beta}(\gamma_n)}^2_2}\right],
\end{align*}
%conditions (\ref{RE_hat}) and (\ref{RE_tilde}) thus imply that 
condition (\ref{RE_hat}) thus implies that 
\begin{align}\label{RE_lb}
\varphi^2_\zeta(M)\geq \kappa > 0.
\end{align}
%it follows from (\ref{RE_hat}) and (\ref{RE_tilde}). 
\end{remark}
For a comprehensive discussion of suitable regularity conditions for proving the consistency of $\ell_1$-penalized estimators, we refer the interested reader to \citet{buhlmann_statistics_2011}.

\begin{lemma}\label{lemma1}
In addition to the sparsity conditions \ref{S_sum} and \ref{S_prod_br}, suppose that the Assumptions \ref{moment_drt}\ref{m3_drt}, \ref{p_error}\ref{gamma_p_error}, \ref{l_error_w}\ref{gamma_l1_error}, \ref{p_error_w}\ref{beta_br_p_error}, \ref{strong_cons} and \ref{hlms} hold. Then it follows that for a given sequence $P_n$ we have that
\begin{align}\label{lemma_res}
&\norm{\hat{\beta}(\hat{\gamma})-\hat{\beta}(\gamma_n)}_1\nonumber \\&=O_{P_n}\left(\max_{i\leq n}|\epsilon_i|\sqrt{\frac{s_\gamma s^* \log (p\lor n)}{n}}+\frac{s^*\sqrt{ \log (p\lor n)}( s_\gamma \sqrt{ \log (p\lor n)}+1)}{n}\right)
\end{align}
where $\epsilon=Y-m(L;\beta_n)$.
\end{lemma}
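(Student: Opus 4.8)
The plan is to compare the penalized estimating equations solved by $\hat{\beta}(\hat{\gamma})$ and $\hat{\beta}(\gamma_n)$ and to show that their difference $d\equiv\hat{\beta}(\hat{\gamma})-\hat{\beta}(\gamma_n)$ is small. Writing $G(\beta,\gamma)=\mathbb{E}_n\{w(L_i;\gamma)(Y_i-\beta^TL_i)L_i\}$ for the unpenalized estimating function, the defining equations read $G(\hat{\beta}(\hat{\gamma}),\hat{\gamma})=-\lambda_\beta\delta|\hat{\beta}(\hat{\gamma})|^{\delta-1}\circ\sign(\hat{\beta}(\hat{\gamma}))$ and likewise with oracle weights. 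Since $G(\hat{\beta}(\gamma_n),\gamma_n)-G(\hat{\beta}(\hat{\gamma}),\gamma_n)=Md$ by linearity of $m(L;\beta)=\beta^TL$ in $\beta$, I would add and subtract $G(\hat{\beta}(\hat{\gamma}),\hat{\gamma})$ to obtain the identity $Md=v_1+v_2+v_3$. Here $v_1=G(\hat{\beta}(\gamma_n),\gamma_n)-G(\hat{\beta}(\hat{\gamma}),\hat{\gamma})$ is the difference of the two estimating functions at their own solutions (equivalently, of the penalty subgradients), which is exactly the quantity controlled by Assumption \ref{strong_cons}, so $\norm{v_1}_\infty=O_{P_n}(\sqrt{\log(p\lor n)}/n)$; the remaining pure weight-perturbation term $\mathbb{E}_n\{(w(L_i;\hat{\gamma})-w(L_i;\gamma_n))(Y_i-\hat{\beta}(\hat{\gamma})^TL_i)L_i\}$ splits, via $Y_i-\hat{\beta}(\hat{\gamma})^TL_i=\epsilon_i-\hat{\delta}^TL_i$, into an $\epsilon$-part $v_2$ and a $\hat{\delta}$-part $v_3$.

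Next I would form the quadratic form $d^TMd=d^Tv_1+d^Tv_2+d^Tv_3$ and bound each summand so that $d^TMd$ reappears on the right. For $d^Tv_1$, H\"older gives $|d^Tv_1|\leq\norm{d}_1\norm{v_1}_\infty$. For $d^Tv_2$ and $d^Tv_3$ I would invoke the Lipschitz property of the weights (Assumption \ref{lipschitz}), $|w(L_i;\hat{\gamma})-w(L_i;\gamma_n)|\leq w(L_i;\gamma_n)|\hat{\gamma}^TL_i-\gamma_n^TL_i|\leq\tfrac14|\hat{\gamma}^TL_i-\gamma_n^TL_i|$, then pull out $\max_{i\leq n}|\epsilon_i|$ (respectively $\max_{i\leq n}|\hat{\gamma}^TL_i-\gamma_n^TL_i|\leq C_3\norm{\hat{\gamma}-\gamma_n}_1$, using Assumption \ref{moment_drt}\ref{m3_drt}) and apply Cauchy--Schwarz to the remaining average. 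The key device is the positivity bound $w(L_i;\gamma_n)\geq c_4$ implied by Assumption \ref{moment_drt}\ref{m4_drt} under the correct exposure model, which yields $M\succeq c_4\mathbb{E}_n\{L_iL_i^T\}$ and hence $\mathbb{E}_n\{(d^TL_i)^2\}^{1/2}\leq c_4^{-1/2}(d^TMd)^{1/2}$; this is what prevents a spurious extra factor of $\sqrt{s^*}$. Combined with the prediction-error rates \ref{p_error}\ref{gamma_p_error} for $\mathbb{E}_n\{(\hat{\gamma}^TL_i-\gamma_n^TL_i)^2\}$ and \ref{p_error_w}\ref{beta_br_p_error} for $\mathbb{E}_n\{(\hat{\delta}^TL_i)^2\}$, this delivers $|d^Tv_2|\lesssim\max_i|\epsilon_i|\sqrt{s_\gamma\log(p\lor n)/n}\,(d^TMd)^{1/2}$ and $|d^Tv_3|\lesssim s_\gamma\sqrt{s^*}\{\log(p\lor n)/n\}(d^TMd)^{1/2}$.

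To close I would use the restricted eigenvalue bound (\ref{RE_lb}): since Assumption \ref{hlms} forces $d$ to be supported on a set of cardinality $\hat{s}_\beta=O_{P_n}(s^*)$, taking $T=\support(d)$ places $d$ in the restricted cone, so $d^TMd\geq\kappa\norm{d}_2^2$ and $\norm{d}_1\leq\sqrt{\hat{s}_\beta}\norm{d}_2=O_{P_n}(\sqrt{s^*})\norm{d}_2$. Substituting these into the bound on $d^Tv_1$ (so that $|d^Tv_1|\lesssim\sqrt{s^*\log(p\lor n)}/n\,(d^TMd)^{1/2}$) turns $d^TMd\leq(A_1+A_2+A_3)(d^TMd)^{1/2}$ into a scalar quadratic inequality for $X=(d^TMd)^{1/2}$, whence $X\leq A_1+A_2+A_3$ with $A_1=\sqrt{s^*\log(p\lor n)}/n$, $A_2=\max_i|\epsilon_i|\sqrt{s_\gamma\log(p\lor n)/n}$, and $A_3=s_\gamma\sqrt{s^*}\log(p\lor n)/n$. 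Finally $\norm{d}_1\leq\sqrt{\hat{s}_\beta}\,\kappa^{-1/2}X=O_{P_n}(\sqrt{s^*})(A_1+A_2+A_3)$, and collecting and factoring the three resulting terms reproduces exactly the claimed rate (\ref{lemma_res}).

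The main obstacle is the bookkeeping in the second step that keeps every contribution at the $(d^TMd)^{1/2}$ (equivalently $\norm{d}_2$) order rather than the $\norm{d}_1$ order: without the positivity lower bound on the weights and without the sparsity control from Assumption \ref{hlms}, the $\ell_1\leftrightarrow\ell_2$ conversions each cost a factor $\sqrt{s^*}$ and the bound degrades from $s^*s_\gamma$ to $(s^*)^2$. A related subtlety is that the max-norm term $\max_i|\hat{\gamma}^TL_i-\gamma_n^TL_i|$ must be controlled at the sharp $s_\gamma$ rate---available because $\hat{\gamma}=\hat{\gamma}_{BR}$ is an unweighted logistic-Lasso fit, whose $\ell_1$ error is $O_{P_n}(s_\gamma\sqrt{\log(p\lor n)/n})$---since the generic $s^*$ rate of Assumption \ref{l_error_w}\ref{gamma_l1_error} would again inflate the second term.
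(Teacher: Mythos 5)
Your proposal follows essentially the same route as the paper's proof: your decomposition $Md=v_1+v_2+v_3$ matches the paper's $\bar{\delta}^TM\bar{\delta}=R_6+R_7+R_8+R_9$ term for term (the estimating-equation difference controlled by Assumption \ref{strong_cons}, the $\epsilon$-weighted perturbation, and the prediction-error-weighted perturbation), the individual bounds invoke the same Lipschitz and prediction-error assumptions, and the closing step via the restricted eigenvalue bound and $\norm{\bar{\delta}}_1\lesssim\sqrt{\hat{s}_\beta}\,\norm{\bar{\delta}}_2$ is the same, with your direct resolution of the scalar quadratic inequality being only a cleaner packaging of the paper's two-case argument. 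Your closing observation that $\norm{\hat{\gamma}-\gamma_n}_1$ must enter at the sharp $O_{P_n}(s_\gamma\sqrt{\log(p\lor n)/n})$ rate available for the unweighted Lasso fit of $\gamma$ (rather than the generic $s^*$ rate of Assumption \ref{l_error_w}\ref{gamma_l1_error}) is correct and is a detail the paper's own bound on $R_9$ uses implicitly without comment.
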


\begin{proof}
Many of the arguments are similar to those in Appendix E of \citet{ning_general_2017}. Let $\bar{\delta}=\hat{\beta}(\hat{\gamma})-\hat{\beta}(\gamma_n)$; then 
\begin{align*}
\bar{\delta}^TM\bar{\delta}
&=\mathbb{E}_n\left(w(L_i;\gamma_n)(\bar{\delta}^TL_i)[Y_i-m\{L_i;\hat{\beta}(\gamma_n)\}]\right)\\&\quad
-\mathbb{E}_n\left(w(L_i;\gamma_n)(\bar{\delta}^TL_i)[Y_i-m\{L_i;\hat{\beta}(\hat{\gamma})\}]\right)\\
%&=\mathbb{E}_n\left(w(L_i;\gamma_n)(\bar{\delta}^TL_i)[Y_i-m\{L_i;\hat{\beta}(\gamma_n)\}]\right)\\
%&\quad-\mathbb{E}_n\left(w(L_i;\hat{\gamma})(\bar{\delta}^TL_i)[Y_i-m\{L_i;\hat{\beta}(\hat{\gamma})\}]\right)\\
%&\quad +\mathbb{E}_n[\{w(L_i;\hat{\gamma})-w(L_i;\gamma_n)\}(\bar{\delta}^TL_i)[Y_i-m\{L_i;\hat{\beta}(\hat{\gamma})\}]]\\
&=\mathbb{E}_n\left(w(L_i;\gamma_n)(\bar{\delta}^TL_i)[Y_i-m\{L_i;\hat{\beta}(\gamma_n)\}]\right)\\
&\quad-\mathbb{E}_n\left(w(L_i;\hat{\gamma})(\bar{\delta}^TL_i)[Y_i-m\{L_i;\hat{\beta}(\hat{\gamma})\}]\right)\\
&\quad +\mathbb{E}_n\left[\{w(L_i;\hat{\gamma})-w(L_i;\gamma_n)\}(\bar{\delta}^TL_i)\{Y_i-m(L_i;\beta_n)\}\right]\\ 
&\quad +\mathbb{E}_n\left(\{w(L_i;\hat{\gamma})-w(L_i;\gamma_n)\}(\bar{\delta}^TL_i)[m(L_i;\beta_n)-m\{L_i;\hat{\beta}(\hat{\gamma})\}]\right)\\
&=R_6+R_7+R_8+R_9.
\end{align*}

Firstly, 
\begin{align*}
&|R_6+R_7|\\
&\leq \norm{\mathbb{E}_n[w(L_i;\gamma_n)\{Y_i-\tilde{m}(L_i)\}L_i-w(L_i;\hat{\gamma})\{Y_i-\hat{m}(L_i)\}L_i]}_\infty\norm{\bar{\delta}}_1\\
&\leq C'\frac{\sqrt{\log (p\lor n)}}{n}\norm{\bar{\delta}}_1
\end{align*}
where $C'$ is a constant, due to Assumption \ref{strong_cons}. For $R_8$, 
\begin{align*}
|R_8|&=\bigg|\mathbb{E}_n\left[\left\{\frac{w(L_i;\hat{\gamma})-w(L_i;\gamma_n)}{w(L_i;\gamma_n)(\hat{\gamma}^TL_i-\gamma^T_nL_i)}\right\}\sqrt{w(L_i;\gamma_n)}(\bar{\delta}^TL_i)\right.\\&\left.\quad \times  \sqrt{w(L_i;\gamma_n)}(\hat{\gamma}^TL_i-\gamma^T_nL_i)\{Y_i-m(L_i;\beta_n)\}\right]\bigg|\\
&\leq \max_{i\leq n}|\epsilon_i|(\bar{\delta}^TM\bar{\delta})^{1/2}\mathbb{E}_n\left[(\hat{\gamma}^TL_i-\gamma^T_nL_i)^2\right]^{1/2}\\
&\lesssim \max_{i\leq n}|\epsilon_i|(\bar{\delta}^TM\bar{\delta})^{1/2}\sqrt{\frac{s_\gamma \log (p\lor n)}{n}}
\end{align*}
under Assumption \ref{p_error} and using the properties of the logistic link function. % and \ref{lipschitz}. 
Next, 
\begin{align*}
|R_9|%&\leq \mathbb{E}_n[|w(L_i;\hat{\gamma})-w(L_i;\gamma_n)| |\hat{m}(L_i)-\tilde{m}(L_i)| |m_0(L_i)-\hat{m}(L_i)|]\\
&\leq \mathbb{E}_n[w(L_i;\gamma_n)|\hat{\gamma}^TL_i-\gamma^T_nL_i||\bar{\delta}^TL_i||m(L_i;\beta_n)-m\{L_i;\hat{\beta}(\hat{\gamma})\}|]\\
&\leq \max_{i\leq n}\norm{L_i}_\infty\norm{\hat{\gamma}-\gamma_n}_1 \mathbb{E}_n[w(L_i;\gamma_n)|\bar{\delta}^TL_i||m(L_i;\beta_n)-m\{L_i;\hat{\beta}(\hat{\gamma})\}|]\\
&\leq \max_{i\leq n}\norm{L_i}_\infty \norm{\hat{\gamma}-\gamma_n}_1 \\&\quad \times (\bar{\delta}^TM\bar{\delta})^{1/2}\mathbb{E}_n\left([m(L_i;\beta_n)-m\{L_i;\hat{\beta}(\hat{\gamma})\}]^2\right)^{1/2}\\
%&= (\bar{\delta}^TM\bar{\delta})^{1/2}O_{P_n}\left(\sqrt{\frac{ s_\beta\log p}{n}}\right) O_{P_n}\left(s_\gamma\sqrt{\frac{ \log p}{n}}\right)\\
&\lesssim (\bar{\delta}^TM\bar{\delta})^{1/2} \left(\frac{s_\gamma\sqrt{s^*}\log (p\lor n)}{n}\right)
\end{align*}
invoking Assumptions \ref{moment_drt}\ref{m3_drt}, \ref{l_error_w}\ref{gamma_l1_error}, \ref{p_error_w} and \ref{beta_br_p_error}.

Putting this together, 
\begin{align}\label{comb_res}
\bar{\delta}^TM\bar{\delta}\leq&  C'\frac{\sqrt{\log p}}{n}\norm{\bar{\delta}}_1\nonumber\\&+(\bar{\delta}^TM\bar{\delta})^{1/2}C^*\left(\max_{i\leq n}|\epsilon_i|\sqrt{\frac{s_\gamma \log (p\lor n)}{n}}+\frac{s_\gamma\sqrt{s^*}\log (p\lor n)}{n}\right)
\end{align}
where $C^*$ is a constant. Let us consider two cases; firstly, assume
\[(\bar{\delta}^TM\bar{\delta})^{1/2}\leq C^*\left(\max_{i\leq n}|\epsilon_i|\sqrt{\frac{s_\gamma \log (p\lor n)}{n}}+\frac{s_\gamma\sqrt{s^*}\log (p\lor n)}{n}\right)\]
holds. Note (\ref{RE_lb}) implies that 
\begin{align*}
(\bar{\delta}^TM\bar{\delta})^{1/2}&\geq \varphi_\zeta(M)\norm{\bar{\delta}}_2\gtrsim %\sqrt{n}\norm{\bar{\delta}}_2\\
\frac{1}{\sqrt{\hat{s}_{\beta}}}\norm{\bar{\delta}}_1.
\end{align*}
Invoking Assumption \ref{hlms} and combining the lower and upper bound,
\begin{align}\label{1st_case_res}
\norm{\bar{\delta}}_1 \lesssim \max_{i\leq n}|\epsilon_i|\sqrt{\frac{s_\gamma s^*\log (p\lor n)}{n}}+\frac{s_\gamma s^*\log (p\lor n)}{n}
\end{align}
On the other hand, if 
\[(\bar{\delta}^TM\bar{\delta})^{1/2}\geq C^*\left(\max_{i\leq n}|\epsilon_i|\sqrt{\frac{s_\gamma \log (p\lor n)}{n}}+\frac{s_\gamma\sqrt{s^*}\log (p\lor n)}{n}\right)\]
then rearranging (\ref{comb_res}), it follows that 
\begin{align*}%\label{1st_res}
&(\bar{\delta}^TM\bar{\delta})^{1/2}\left\{(\bar{\delta}^TM\bar{\delta})^{1/2}-C^*\left(\max_{i\leq n}|\epsilon_i|\sqrt{\frac{s_\gamma \log (p\lor n)}{n}}+\frac{s_\gamma\sqrt{s^*}\log (p\lor n)}{n}\right)\right\}\\
&\leq C'\frac{\sqrt{\log (p\lor n)}}{n}\norm{\bar{\delta}}_1
\end{align*}
%Since \[\norm{\hat{\beta}(\hat{\gamma})-\hat{\beta}(\gamma_n)}^2_1\leq s_\beta \mathbb{E}_n[w(L_i;\gamma_n)\{\hat{m}(L_i)-\tilde{m}(L_i)\}^2]\]
Using (\ref{RE_lb}), we have 
\begin{align*}%\label{1st_res}
&(\bar{\delta}^TM\bar{\delta})^{1/2}-C^*\left(\max_{i\leq n}|\epsilon_i|\sqrt{\frac{s_\gamma \log (p\lor n)}{n}}+\frac{s_\gamma\sqrt{s^*}\log (p\lor n)}{n}\right)\\&\lesssim \frac{\sqrt{s^* \log (p\lor n)}}{n}
\end{align*}
so 
\begin{align*}
&(\bar{\delta}^TM\bar{\delta})^{1/2}\lesssim \max_{i\leq n}|\epsilon_i|\sqrt{\frac{s_\gamma \log (p\lor n)}{n}}+\frac{s_\gamma\sqrt{s^*}\log (p\lor n)+\sqrt{s^* \log (p\lor n)}}{n}
\end{align*}
and again by (\ref{RE_lb}),
\begin{align}\label{2nd_case_res}
&\norm{\bar{\delta}}_1\lesssim \max_{i\leq n}|\epsilon_i|\sqrt{\frac{s_\gamma s^* \log (p\lor n)}{n}}+\frac{s^*\sqrt{ \log (p\lor n)}( s_\gamma \sqrt{ \log (p\lor n)}+1)}{n}.
\end{align}
Taking the union of the bounds (\ref{1st_case_res}) and (\ref{2nd_case_res}) completes the proof.
%Under the sparsity conditions, both (\ref{1st_case_res}) and (\ref{2nd_case_res}) simplify to (\ref{lemma_res}).
\end{proof}

\subsection{Proof of Theorem \ref{theorem4_drt}}

%An additional assumption is required.
\begin{assumption}\label{bounded_res}
(Regularity conditions on the errors)\\ $\max_{i\leq n}|\epsilon_i|\sqrt{s_\gamma s^*}\log(p\lor n)=o(\sqrt{n})$ w.p. 1.
\end{assumption}
\begin{remark}
\normalfont This can be most simply shown to hold if $\max_{i\leq n}|\epsilon_i|=O_{P_n}(1)$. \citet{belloni_sparse_2012} and \citet{farrell_robust_2015} suppose that $\max_{i\leq n}|\epsilon_i|=O_{P_n}(n^{1/r})$ for some $r>2$, such that larger values of $r$ allow one to relax assumptions on sparsity in exchange for stronger conditions on the distributions of the errors. If the $\epsilon_i$ are normal, then $r$ can be arbitrarily large. Alternatively, one can place the stronger sub-Gaussian conditions on $\epsilon_i$, whereby $\max_{i\leq n}|\epsilon_i|=O_{P_n}(\sqrt{\log n})$. 
\end{remark}

We now give the proof.

\begin{proof}
%Here, we restrict to settings where $m(L;\beta_n)=\beta^T_nL$ and $\pi(L;\gamma_n)=\expit(\gamma^T_nL)$. In this case, note that $\hat{\gamma}_{BR}=\hat{\gamma}$, since no weights are used in estimating this parameter. Also, let us define the weights $w(L;\gamma_n)=\pi(L;\gamma_n)\{1-\pi(L;\gamma_n)\}$, and likewise for $w(L;\hat{\gamma})$. In order to make transparent the dependence of the estimator $\hat{\beta}_{BR}$ on the weights, we introduce the notation  $\hat{\beta}(\hat{\gamma})$ for when $\gamma$ (required for the weights) is estimated from the data and $\hat{\beta}(\gamma_n)$ otherwise. 

Decomposing $\sqrt{n}\mathbb{E}_n\{U_i(\hat{\eta}_{BR})\}-\sqrt{n}\mathbb{E}_n\{U_i(\eta_n)\}$
as in the proof of Theorem \ref{theorem1_drt}, one can show $|R_2|=o_{P_n}(1)$ along the lines of the proof of Theorem \ref{theorem1_drt}, appealing to Assumptions \ref{moment_drt}\ref{m1_drt}, \ref{p_error}\ref{gamma_p_error} and sparsity condition \ref{S_sum}. Similarly, one can show that $R_3$ is $o_{P_n}(1)$ using the joint sparsity condition \ref{S_prod} and Assumptions \ref{p_error}\ref{gamma_p_error} and \ref{p_error_w}\ref{beta_br_p_error}.

Then for $R_1$,
\begin{align*}
&\frac{1}{\sqrt{n}}\sum^n_{i=1}\{A_i-\pi(L_i;\gamma_n)\}[m(L_i;\beta_n)-m\{L_i;\hat{\beta}(\hat{\gamma})\}]\\
&=\frac{1}{\sqrt{n}}\sum^n_{i=1}\{A_i-\pi(L_i;\gamma_n)\}[m(L_i;\beta_n)-m\{L_i;\hat{\beta}(\gamma_n)\}]\\
&\quad+\frac{1}{\sqrt{n}}\sum^n_{i=1}\{A_i-\pi(L_i;\gamma_n)\}[m\{L_i;\hat{\beta}(\gamma_n)\}-m\{L_i;\hat{\beta}(\hat{\gamma})\}]\\
&=R_{1a}+R_{1b}
\end{align*}
One can show $|R_{1a}|=o_{P_n}(1)$ using Assumption \ref{p_error_w}\ref{beta_br_p_error} and sparsity condition \ref{S_sum}. For $R_{1b}$, 
\begin{align*}
&\bigg|\frac{1}{\sqrt{n}}\sum^n_{i=1}\{A_i-\pi(L_i;\gamma_n)\}[m\{L_i;\hat{\beta}(\gamma_n)\}-m\{L_i;\hat{\beta}(\hat{\gamma})\}]\bigg|
\\%&=|\sqrt{n}\mathbb{E}_n[\{A_i-\pi(L_i;\gamma_n)\}L_i] \{\hat{\beta}(\gamma_n)-\hat{\beta}(\hat{\gamma})\}|\\
&\leq\sqrt{n}\norm{\mathbb{E}_n[\{A_i-\pi(L_i;\gamma_n)\}L_i]}_{\infty} \norm{\hat{\beta}(\gamma_n)-\hat{\beta}(\hat{\gamma})}_1\\
&=O_{P_n}(\sqrt{\log (p\lor n)})\\&\quad \times O_{P_n}\left( \max_{i\leq n}|\epsilon_i|\sqrt{\frac{s_\gamma s^* \log (p\lor n)}{n}}+\frac{s^*\sqrt{ \log (p\lor n)}( s_\gamma \sqrt{ \log (p\lor n)}+1)}{n}\right)
%\label{unsolved}
%\\&\leq \max_{1\leq j\leq p}|\mathbb{E}_n[\{A_i-\pi(L_i;\gamma_n)\}L_{ij}]|
%\\&\leq \max_{1\leq j\leq p}\bigg|\frac{1}{\sqrt{n}}\sum^n_{i=1}\frac{L_{ij}\{A_i-\pi(L_i;\gamma_n)\}}{\sqrt{\mathbb{E}_n[L^2_{ij}\{A_i-\pi(L_i;\gamma_n)\}^2]}}\bigg|\max_{1\leq j\leq p}\sqrt{\mathbb{E}_n[L^2_{ij}\{A_i-\pi(L_i;\gamma_n)\}^2]}
%\sqrt{n}||\hat{\beta}(\gamma_n)-\hat{\beta}(\hat{\gamma})||_1%+o_{P_n}\bigg(\frac{1}{\sqrt{n}}\bigg)
\end{align*}
%Given Assumption \ref{concentration_drt}, we have
%\begin{align*}
%& \max_{1\leq j\leq p}|\mathbb{E}_n[\{A_i-\pi(L_i;\gamma_n)\}L_{ij}]|\\
%&\leq  \max_{1\leq j\leq p}\mathbb{E}_n[|A_i-\pi(L_i;\gamma_n)| |L_{ij}|]=O_{P_n}(1)
%\end{align*}
by (\ref{concentration_drt}) and Lemma 1. Hence under sparsity assumptions \ref{S_sum}, \ref{S_prod_br} and Assumption \ref{bounded_res}, $|R_{1b}|=o_{P_n}(1)$ and thus $R_1=o_{P_n}(1)$. By then repeating Steps 2-4 from the proof of Theorem \ref{theorem1_drt}, the main result follows.
\end{proof}

\begin{remark}
\normalfont We note that it follows from the above proof that sharper results are also available in linear models under misspecification than are given in Theorem \ref{theorem2_drt}. Namely, when either model $\mathcal{A}$ or the linear model for $\mathcal{B}$ is misspecified, ultra-sparsity is only required in the correct model. For example, if $E(A|L)=\pi(L;\gamma)$, then we require $s^2_\gamma=o(n)$ but only $s_\beta=o(n)$ (ignoring log factors).
\end{remark}

\section{Appendix B}\label{appB_drt}

In the figure on the following page, we describe an iterative method for $\gamma$ and $\beta$ (based on the reasoning in Section \ref{nuis_mis}), when both are parameters indexing logistic models. In practice, one can take the penalty terms obtained via cross validation during the first iteration of the algorithm ($j=1$) and use the same terms in subsequent iterations.

\begin{algorithm}%\label{algo1_drt}
\caption{An algorithm for estimating $\eta$ when $Y$ is binary}\label{alg:euclid}
\begin{enumerate}
\item Estimate $\gamma$ and $\beta$ as $\hat{\gamma}^{(0)}$ and $\hat{\beta}^{(0)}$ using (unweighted) $\ell_1$-penalized logistic regression. Let $\check{\gamma}^{(0)}$ and $\check{\beta}^{(0)}$ denote the refitted estimates.
\item Calculate the weights $w(L_i;\hat{\gamma}^{(0)})=\expit(\hat{\gamma}^{(0)'}L_i)\{1-\expit(\hat{\gamma}^{(0)'}L_i)\}$, $w(L_i;\hat{\beta}^{(0)})=\expit(\hat{\beta}^{(0)'}L_i)\{1-\expit(\hat{\beta}^{(0)'}L_i)\}$, $w(L_i;\check{\gamma}^{(0)})$ and $w(L_i;\check{\beta}^{(0)})$. Calculate the objective function
\begin{align*}
\check{\nu}^{(0)}=&\frac{1}{n}\sum^{n}_{i=1}\log\{1+\exp(\check{\gamma}^{(0)'}L_i)\}-A_i(\check{\gamma}^{(0)'}L_i)+\log\{1+\exp(\check{\beta}^{(0)'}L_i)\}\\&\quad-Y_i(\check{\beta}^{(0)'}L_i)
\end{align*}
\item Set $j=0$ and carry out the following recursive algorithm:
\begin{enumerate}
\item Set $j=j+1$.
 \item Using the initial estimates, re-estimate $\gamma$ and $\beta$ as the solutions $\hat{\gamma}^{(j)}$ and $\hat{\beta}^{(j)}$ to
\begin{align*}
0&=\sum^n_{i=1}w(L_i;\hat{\beta}^{(j-1)'})\{A_i-\expit(\gamma^TL_i)\}L_i+\lambda_\gamma\delta|\gamma|^{\delta-1}\circ\sign(\gamma)\\
0&=\sum^n_{i=1}w(L_i;\hat{\gamma}^{(j-1)'})\{Y_i-\expit(\beta^TL_i)\}L_i+\lambda_\beta\delta|\beta|^{\delta-1}\circ\sign(\beta)
\end{align*} 
Similarly, using $w(L_i;\check{\gamma}^{(j-1)})$ and $w(L_i;\check{\beta}^{(j-1)})$, obtain the refitted $\check{\gamma}^{(j)}$ and $\check{\beta}^{(j)}$.
\item Re-evaluate the objective function as:
\begin{align*}
\check{\nu}^{(j)}=&\frac{1}{n}\sum^{n}_{i=1}\bigg[\log\{1+\exp(\check{\gamma}^{(j)'}L_i)\}-A_i(\check{\gamma}^{(j)'}L_i)\bigg]w(L_i;\check{\beta}^{(j-1)'})\\
&\quad\quad+\bigg[\log\{1+\exp(\check{\beta}^{(j)'}L_i)\}-Y_i(\check{\beta}^{(j)'}L_i)\bigg]w(L_i;\check{\gamma}^{(j-1)'})
\end{align*}
\item If $|\check{\nu}^{(j)}-\check{\nu}^{(j-1)}|<0.0001$, stop the algorithm, and set $\check{\gamma}_{BR}=\check{\gamma}^{(j)}$ and $\check{\beta}_{BR}=\check{\beta}^{(j)}$.
\end{enumerate}
\end{enumerate}
\end{algorithm}

\section{Appendix C}\label{appC_drt}

Here we include some additional simulation results. Compared with the setting considered in Section \ref{simulations} of the main paper, we allowed for a more dense model for the exposure $A$. Specifically $g=\left(40\frac{\log(20)}{n^{1/2}}, 40 \frac{\log(19)}{n^{1/2}}, ...,40 \frac{\log(2)}{n^{1/2}},0_{20},0_{21}, 2 \frac{\log(2)}{n^{1/2}},...,2\frac{\log(60)}{n^{1/2}},0_{81},...,0_{p}  \right)$ now. Data generation process under the alternative outcome model (to evaluate the impact of fitting a misspecified linear model) was the same as in the main paper. 

We further relaxed the assumption in Section \ref{simulations} of homoscedastic errors. Specifically, the outcome $Y_i$ was normally distributed conditional on $L_i$ with mean $1 + \beta_iL_i$ and standard deviation $\sqrt{\left( \dfrac{(1 + \beta_iL_i)^2}{E_n((1 + \beta_iL_i)^2)} \right)}$. In settings where we fitted a misspecified outcome model $Y_i$ was normally distributed with mean $1+\beta^T\left(|L_{.,[1:3]}|; L_{.,[4:p]}\right)$ and standard deviation  \\$\sqrt{\left( \dfrac{\{1+\beta^T\left(|L_{.,[1:3]}|; L_{.,[4:p]}\right)\}^2}{E_n[\{1+\beta^T\left(|L_{.,[1:3]}|; L_{.,[4:p]}\right)\}^2]} \right)}$.

\begin{table}[h]
	\centering
	\caption{ Type I errors based on 1,000 replications  in settings with a denser propensity score: $\Sigma=\textnormal{I}_{p\times p}$.}
	\label{Table1app}
	\begin{tabular}{l  r r r r r r r}
		\\
		\textit{Correct models} &&\\
		
		\hline
		& $n=200$   &   &  $n=500$  & & $n=200$ & & $n=200$\\  
		Methods    &  $p=200$ &    &  $p=500$ & & $p=100$ & &  $p=250$ \\ \hline
		Standard na\"ive (forced) & 0.520   & &  0.813  & &  0.289   && 0.602   \\  
		Standard na\"ive (not forced) & 0.272  & &  0.522  & &  0.171   && 0.336   \\  
		PDS (pre-specified) & 0.526  & & 0.779   & & 0.523    &&  0.517  \\  
		PO (pre-specified) & 0.517  & &  0.758  & &  0.521   &&  0.502  \\  
		PDS (CV) & 0.074  & & 0.070   & &  0.066   &&  0.079  \\  
		%PDS (CV) & 0.061 & & 0.066 %MBSE\\
		PMLE-DR & 0.061  & & 0.053   & &  0.067  &&  0.077  \\

		BR-DR  & 0.054  & & 0.061   & &  0.064   && 0.054  \\  \hline\\
		%PMLE-DR & 0.055 & & 0.077 \\ 
		%BR-DR  & 0.053 & & 0.069 \\\hline\\

		\textit{Incorrect outcome model} &&\\
		\hline
		& $n=200$  & & $n=500$  && $n=200$ && $n=200$ \\ 
		Methods  &  $p=200$ &  &  $p=500$  && $p=100$ && $p=250$\\ \hline
		Standard na\"ive (forced) & 0.363  & &  0.576  & & 0.205    && 0.456   \\  
		Standard na\"ive (not forced) & 0.169  & &  0.304   & &  0.114   &&  0.196  \\  
		PDS (pre-specified) & 0.344   & &  0.611  & &  0.342   &&  0.310  \\  
		PO (pre-specified)  & 0.338   & &  0.596  & &  0.325   &&  0.309   \\  
		PDS (CV)  &  0.064  & & 0.068  & &   0.068   && 0.073  \\  
		%PDS (CV) & 0.066 & 0.054& %MBSE\\
		
		PMLE-DR &   0.051  & & 0.050   & & 0.063   &&  0.071  \\  
		BR-DR  & 0.055  & &  0.052   & &  0.069   && 0.041 \\   \hline %\hline
		%PMLE-DR & 0.056 && 0.066 \\
		%BR-DR  & 0.041 && 0.072 \\\hline 

	\end{tabular}
	\vspace*{-6pt}
\end{table}

\newpage

\begin{table}[h]
	\centering
	\caption{ Type I errors based on 1,000 replications in settings with a denser propensity score: $\Sigma=[\sigma_{i,j}]_{1\leq i,j\leq p}$.}
	\label{Table2app}
	\begin{tabular}{l  r r r r r r r}
		\\
		\textit{Correct models} &&\\
		
		\hline
		& $n=200$   &   &  $n=500$  & & $n=200$ & & $n=200$\\  
		Methods    &  $p=200$ &    &  $p=500$ & & $p=100$ & &  $p=250$ \\ \hline
		Standard na\"ive (forced) & 0.431  & & 0.368   & &  0.197   &&  0.505  \\  
		Standard na\"ive (not forced) & 0.173   & & 0.174   & & 0.088    && 0.203   \\  
		PDS (pre-specified) & 0.132   & &  0.082  & &  0.096   && 0.143   \\  
		PO (pre-specified) & 0.102  & & 0.071   & &  0.078   && 0.094   \\  
		PDS (CV) & 0.056  & &  0.071  & & 0.064    &&  0.068  \\  
		%PDS (CV) & 0.061 & & 0.066 %MBSE\\
		PMLE-DR & 0.050  & & 0.062   & &  0.046   &&  0.062   \\
		  
		BR-DR  &  0.047  & & 0.041   & & 0.043    && 0.041 \\  \hline \\
		%PMLE-DR & 0.055 & & 0.077 \\ 
		%BR-DR  & 0.053 & & 0.069 \\\hline\\

		\textit{Incorrect outcome model} &&\\
		\hline
		& $n=200$  & & $n=500$  && $n=200$ && $n=200$ \\ 
		Methods  &  $p=200$ &  &  $p=500$  && $p=100$ && $p=250$\\ \hline
		Standard na\"ive (forced) & 0.315  & &  0.265  & & 0.156    &&  0.353  \\  
		Standard na\"ive (not forced) & 0.126  & & 0.116   & & 0.090    &&  0.137  \\  
		PDS (pre-specified) & 0.097  & & 0.066   & & 0.070    && 0.091   \\  
		PO (pre-specified)  & 0.073  & & 0.066   & &  0.056   && 0.074   \\  
		PDS (CV)  & 0.063  & &  0.060  & &   0.060  &&  0.068  \\  
		%PDS (CV) & 0.066 & 0.054& %MBSE\\
		
		PMLE-DR & 0.055  & & 0.063  & & 0.048   && 0.064    \\  
		BR-DR  & 0.038  & & 0.043   & & 0.048    && 0.037  \\   \hline %\hline
		%PMLE-DR & 0.056 && 0.066 \\
		%BR-DR  & 0.041 && 0.072 \\\hline 
\\
\\

	\end{tabular}
	\vspace*{-6pt}
\end{table}

\vspace{-1cm}
\begin{table}[h]
	\centering
	\caption{ Type I errors based on 1,000 replications in settings with heteroscedastic errors: $\Sigma=\textnormal{I}_{p\times p}$.}
	\label{Table3app}
\resizebox{14cm}{!}{	\begin{tabular}{l  r r r r r r r}
		\\
		\textit{Correct models} &&\\
		
		\hline
		& $n=200$   &   &  $n=500$  & & $n=200$ & & $n=200$\\  
		Methods    &  $p=200$ &    &  $p=500$ & & $p=100$ & &  $p=250$ \\ \hline
		Standard na\"ive (forced) & 0.532 && 0.815 && 0.279  && 0.613 \\  
		Standard na\"ive (not forced) & 0.260  &&  0.550 && 0.171  &&0.324  \\  
		PDS (pre-specified) & 0.543 && 0.779 && 0.549  &&0.531  \\  
		PO (pre-specified) & 0.502  && 0.738 &&0.517   && 0.497 \\  
		PDS (CV) & 0.073 && 0.060 && 0.062  && 0.079 \\  
		%PDS (CV) & 0.061 & & 0.066 %MBSE\\
		PMLE-DR & 0.043 && 0.060  && 0.057  && 0.077 \\  
		BR-DR  &  0.060 && 0.062 && 0.069  && 0.054 \\ \hline \\
		%PMLE-DR & 0.055 & & 0.077 \\ 
		%BR-DR  & 0.053 & & 0.069 \\\hline\\

				\textit{Incorrect outcome model} &&\\
		\hline
		& $n=200$   &   &  $n=500$  & & $n=200$ & & $n=200$\\  
		Methods    &  $p=200$ &    &  $p=500$ & & $p=100$ & &  $p=250$ \\ \hline
		Standard na\"ive (forced) & 0.375  && 0.590 &&  0.203 && 0.436 \\  
		Standard na\"ive (not forced) &0.152  && 0.309  && 0.105  && 0.199 \\  
		PDS (pre-specified) &0.348  && 0.627 && 0.369  && 0.345  \\  
		PO (pre-specified) &0.310  && 0.594 && 0.348  && 0.322 \\  
		PDS (CV) &0.066 && 0.064 && 0.071  && 0.068 \\  
		%PDS (CV) & 0.061 & & 0.066 %MBSE\\
		PMLE-DR & 0.045 && 0.058 && 0.055  && 0.067 \\  
		BR-DR  & 0.042 && 0.069 &&  0.072  && 0.049   \\ \hline \\  
		%PMLE-DR & 0.055 & & 0.077 \\ 
		%BR-DR  & 0.053 & & 0.069 \\\hline\\

	\end{tabular}
}
	%\vspace*{-6pt}
\end{table}

\bibliographystyle{apalike}
\bibliography{HDinf_MM}
   
\end{document}